\newcommand{\blind}{1}
\newcommand{\rmd}{{\rm d}}
\newcommand{\bs}{\boldsymbol}
\newcommand{\bm}{\boldsymbol}
\newcommand{\Sc}{{\fontfamily{pzc}\selectfont c}}
\newcommand{\Cov}{\mathrm{Cov}}
\newcommand{\Cor}{\mathrm{Cor}}
\newcommand{\cor}{\mathrm{Cor}}
\newtheorem{proposition}{Proposition}[section]
\theoremstyle{empty}
\newtheorem*{duplicate}{Proposition 4.1}
\begin{document}

\def\spacingset#1{\renewcommand{\baselinestretch}%
{#1}\small\normalsize} \spacingset{1}


\if1\blind
{
  \title{\bf Spectral Extremal Connectivity of Two-State Seizure Brain Waves}
  \author{Mara Sherlin D. Talento$^{a,b}$,
    Jordan Richards$^{c}$,
    Marco Pinto-Orellana$^{d}$,\\
    Raphael Huser$^a$
    and
    Hernando C. Ombao$^{a,b}$\\
    $^a$Statistics Program, Computer, Electrical and Mathematical Science \& \\ Engineering Division, King Abdullah University of Science and \\
    Technology, KSA, 23955 - 6900 \\
    $^b$Neuro-AI Laboratory, Computer, Electrical and Mathematical Science \& \\ Engineering Division, King Abdullah University of Science and \\ Technology, KSA, 23955 - 6900 \\
    $^{c}$School of Mathematics and Maxwell Institute for Mathematical Sciences, \\
    University of Edinburgh, Edinburgh, UK, EH9 3FD \\
    $^{d}$Department of Biomedical Engineering, University of California-Irvine, \\ 
    USA, 92697
    }
  \maketitle
} \fi

\if0\blind
{
  \bigskip
  \bigskip
  \bigskip
  \begin{center}
    {\LARGE\bf Spectral Extreme Connectivity Analysis of Two-State Seizure Brain Waves Data}
\end{center}
  \medskip
} \fi

\bigskip
\begin{abstract}
Coherence analysis plays a vital role in the study of functional brain connectivity. However, coherence captures only linear spectral associations, and thus can produce misleading findings when ignoring variations of connectivity in the tails of the distribution. This limitation becomes important when investigating extreme neural events that are characterized by large signal amplitudes. The focus of this paper is to examine connectivity in the tails of the distribution, as this reveals salient information that may be overlooked by standard methods. We develop a novel notion of spectral tail association of periodograms to study connectivity in the network of electroencephalogram (EEG) signals of seizure-prone neonates. We further develop a novel non-stationary extremal dependence model for multivariate time series that captures differences in extremal dependence during different brain phases, namely burst-suppression and non-burst-suppression. One advantage of our proposed approach is its ability to identify tail connectivity at key frequency bands that could be associated with outbursts of energy which may lead to seizures. We discuss these novel scientific findings alongside a comparison of the extremal behavior of brain signals for epileptic and non-epileptic patients.
\end{abstract}

\noindent%
{\it Keywords:} coherence analysis; conditional extremes; electroencephalogram data; extreme value theory; spectral analysis; tail dependence
\vfill

\newpage
\spacingset{1.9} 

\section{Introduction}\label{chap:introduction}

Seizures in neonates are known to disrupt brain development. 
Compared to adults, study of seizures in infants is more challenging
because symptoms in neonates are often more subtle \citep{Stevenson2019}. To study and assess brain function, various brain imaging modalities, such as electroencephalograms (EEGs), are used to diagnose mental and neurological disorders, e.g., seizures and epilepsy \citep[]{Muthuswamy1999DeltaTheta, Stevenson2019}. EEGs indirectly measure cortical electrical activity through non-invasive sensors placed on the scalp. In this paper, we analyze EEG recordings 
for a subset of 79 neonates in a neonatal intensive care unit (NICU) of the BAby Brain Activity Center (BABA Center), Finland \citep{Stevenson2019}. These neonates were initially recorded as having had a seizure; however, further analysis by experts confirmed seizures in only 39 patients. 
Figure~\ref{framework} shows a snippet of EEGs from the temporal lobe channels of one of the neonates. Observe that the data exhibit non-stationarity due to sudden outbursts.
This specific pattern, often seen in neonates, is called ``burst-suppression" \citep{Douglass2002BSinNeo}.

Although the presence of ``burst suppression'' patterns (see Figure~\ref{framework}) is not always associated with abnormalities in the brain \citep{Douglass2002BSinNeo}, one characteristic that can delineate hostile patterns is  connectivity (dependence between channels) in a brain network. For instance, a recent study of a surgically-removed portion of the temporal region of an epileptic patient revealed a very rare connectivity of axons at the seizure site \citep{Shapson2024Petavoxel}. Therefore, developing a statistical model to characterize brain connectivity during burst-suppression and non-burst-suppression patterns (hereafter referred to as the ``burst phase" and ``non-burst phase", respectively) can provide insights into brain functionality. Thus, in this paper, we develop a novel non-stationary multivariate model for EEG signals, that accounts for differences between states (e.g., brain phases, such as burst-suppression vs. non-burst-suppression). 


\vspace{0.25in}

\begin{figure}
    \centering
    \includegraphics[width=1.001\textwidth]{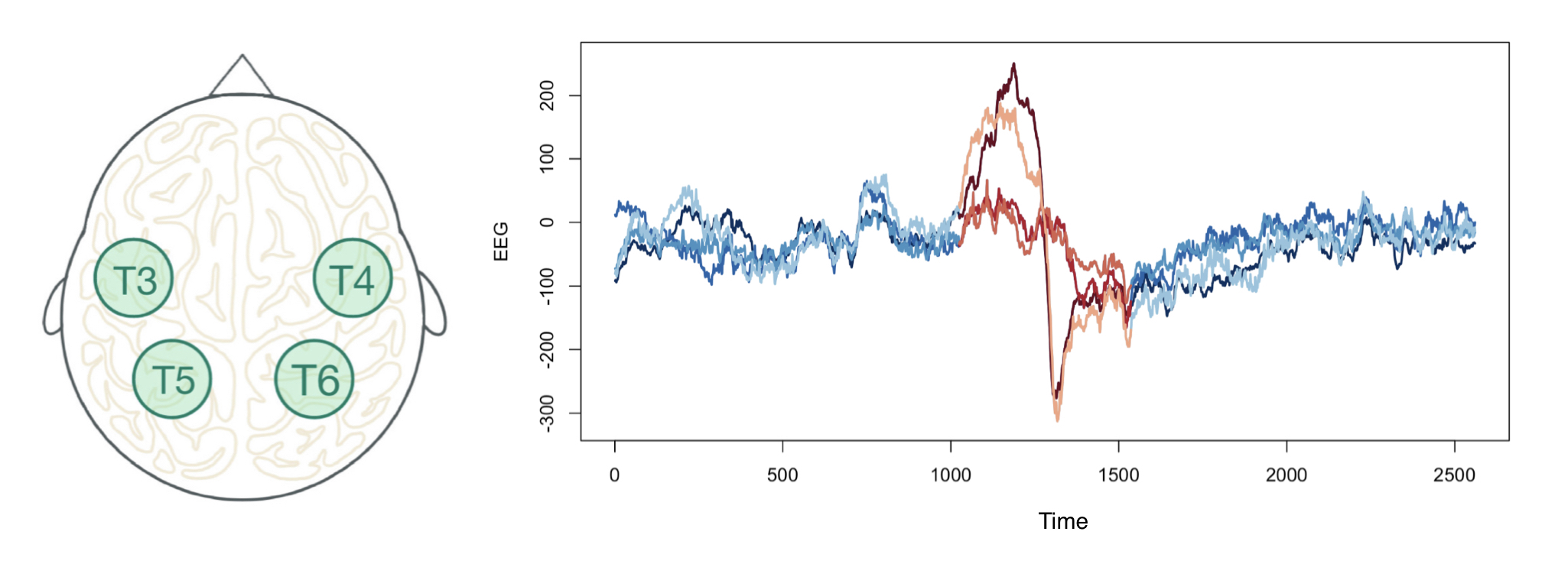}
    \caption{(Left) Channels T3-T6 in the temporal lobe, in green. (Right) Multivariate EEG Signals from channels in the temporal lobe of a neonate with normal ultrasound admitted to the NICU of the BAby Brain Activity Center, Finland. The red lines are signals during a time interval that was flagged as a burst-suppression pattern by our algorithm, described in the Supplementary Material, Section~\ref{appn2}.}
    \label{framework}
\end{figure}

\noindent {\bf EEG data.}
EEGs are often considered as a ``sum" of oscillating waveforms with random amplitudes. Thus, dependence between EEG signals can be characterized as synchronization between these oscillations.
In \cite{Srinivasan2007EEG}, standard analyses decompose EEGs into oscillations at the following five distinct frequency bands, denoted by $\Omega_{\ell}$ for ${\ell = 1,\dots, 5}$: 
(1) Delta, {$\Omega_1 := (0,4]$ Hertz} or Hz hereafter, which is dominant during sleep; 
(2) Theta, {$\Omega_2 := (4,8]$ Hz}, which is associated with drowsiness; 
(3) Alpha, {$\Omega_3 := (8,12] \text{ Hz}$}, which is associated with resting-state; (4) Beta, {$\Omega_4 := (12,30] \text{ Hz}$} and (5) Gamma, ${\Omega_5 := (30,50] \text{ Hz}}$, which are dominant when performing mental and cognitive tasks \citep{kumar2012analysis}. 
To define and estimate between-channel connectivity, the first step is to decompose the original signal into these bands. A general framework for modeling dependence between oscillations is developed in \cite{Ombao2022} which we summarize here.
Denote the neonate multi-channel EEGs  to be ${\{\bs{X}(t)\}_{t=1}^T := (X_1(t),\ldots,X_p(t))^\top}$, where 
$p$ is the number of channels and $T$ is the observation period. While Figure~\ref{framework} suggests that the EEGs are globally non-stationary, they can be modeled as weakly stationary within local (short) time blocks \citep[e.g., piecewise stationary or locally stationary;][]{Adak1998LPS, Dahlhaus2000likelihood}. Suppose that the EEGs are segmented into $B$ blocks (each of length $L$) where the EEGs are approximately weakly stationary within each block. Denote the set of time points in block $b$ to be $\mathcal{T}_b := \{(b-1)L+1 , \dots , bL\}$. Here, denote $T = LB$ to be the total number of time points across all blocks. Each signal, $X_{j}(t)$, $j = 1, \dots, p$, can be separately decomposed within each time block, $\mathcal{T}_b$, $b = 1, \dots, B$, into the standard oscillations via bandpass linear filtering (e.g., using the Butterworth filter), so that $X_j(t)  :=  \sum_{\ell = 1}^5 X_{j, \Omega_\ell}(t), t \in \mathcal{T}_b$, where $X_{j, \Omega_\ell}(t)$ is the filtered signal for channel $j$ corresponding to $\Omega_\ell$ \citep[see][]{Ombao2022}.

We develop a statistical modeling framework and a related cross-channel dependence measure which is motivated by the National Institute of Health \citep{NIH23} definition of seizure-related brain behavior; this defines the onset of a seizure as a concurrent outburst of neuronal electrical discharges across multiple neurons. 
Instead of modeling the filtered signals (or filtered time series), we study the spectral power (or squared amplitudes), which is a more direct representation of neuronal energy (or signal variance). The data analogue of the spectrum (or spectral power) are the periodograms of each EEG channel \citep{brillinger2001time}. 
As will be formalized in Section~\ref{subsec:TimeFreq}, where periodograms are directly calculated from the discrete Fourier transform (DFT), the periodogram of $\{X_j(t)\}_{t \in \mathcal{T}_b}$ at the $\Omega_\ell$-frequency band, denoted as $I^{(\ell)}_{j}(b)$, may be computed for the $b$-th block as
\begin{equation}
I^{(\ell)}_{j}(b) = \frac{1}{L}\sum_{t \in \mathcal{T}_b} \left[ X_{j,\Omega_\ell}(t) \right]^2, \;\;\; j = 1, \dots, p, \; \; b = 1, \dots, B. \label{Eq:IntroPeriodograms1}
\end{equation}
Above, it is assumed that $X_{j,\Omega_\ell}(t)$ is oscillating around its mean, 0. In this paper, we study dependence between the oscillation amplitudes of signals $\{\bs{X}(t)\}_{t \in \mathcal{T}_b}$, for all ${b = 1, \dots, B}$, through the periodograms $\bs{I}^{(\ell)}(b) = (I^{(\ell)}_{1}(b), \dots, I^{(\ell)}_{p}(b))^\top$, $\ell = 1, \dots, 5$, derived from Equation~\eqref{Eq:IntroPeriodograms1}. \cite{Guhr2021MultAmplitudes} propose a multivariate model for non-stationary time series by segmenting the long multivariate time series into distinct blocks and defining a measure for each block (similar to the approach described above). While their study did not introduce a tail dependence measure, it highlighted that non-stationarity, arising from fluctuating correlation matrices across blocks, can lead to a heavy-tailed distribution for these defined measures.

\noindent {\bf Modeling extremal dependence in spectral power.} 
To demonstrate the importance of studying tail dependence, we explore correlation between components of the periodogram (or squared-amplitudes).
While our analysis focuses on multi-channel data, for simplicity of this current discussion, we consider only two signals, say channel T4 (as $\{X_j(t)\}_{t=1}^T$) and channel T6 (as $\{X_k(t)\}_{t=1}^T$). We show here an exploratory analysis to study the strength of dependence
between the squared-amplitudes of the $\Omega_5-$band (Gamma-band) oscillations in these two channels, i.e., $\{I^{(5)}_{j}(b)\}_{b = 1}^B$ and $\{I^{(5)}_{k}(b)\}_{b = 1}^B$. In particular, we focus on dependence in the tails, when $\{I^{(5)}_{j}(b)\}_{b = 1}^B$ is ``large"; that is, we investigate the joint behavior of $(I^{(5)}_{j}(b), I^{(5)}_{k}(b)) \ | (I^{(5)}_{j}(b) > v)$ for some large value $v$. 
Let $U_{j}$ be a transformation of $\{I^{(5)}_{j}(b)\}_{b = 1}^B$ to uniform margins and let $\mathbb{F}$ be the standard Laplace distribution function. Transforming the data to a common margin avoids observing spurious dependence caused by differences in magnitude, and can be achieved via a simple rank-transformation. Here, we used the Laplace distribution, as its tails are known to converge faster to a non-degenerate distribution as we increase $v$ \citep{Rohrbeck2024Laplace}.
We estimate correlation between $\{I^{(5)}_{j}(b)\}_{b = 1}^B$ and $\{I^{(5)}_{k}(b)\}_{b = 1}^B$ (denoted by ${\rho := \Cor(\mathbb{F}^{-1}(U_{j}), \mathbb{F}^{-1}(U_{k}) )  \in [-1,1]}$), and then decompose it for the bulk (using the lower 90\% of the data; denoted by $\rho_A$) and for the tails (using the upper 10\% of the data; denoted by $\rho_B$), with 
\begin{eqnarray*}
    \rho_A & := & \Cor(\mathbb{F}^{-1}(U_{j}), \mathbb{F}^{-1}(U_{k}) \ | ( U_{j} \leq 0.9 )  ) \in [-1,1], \\
\rho_B & := & \Cor(\mathbb{F}^{-1}(U_{j}), \mathbb{F}^{-1}(U_{k}) \ | ( U_{j} > 0.9 ) )   \in [-1,1],
\end{eqnarray*}
Figure~\ref{CondAssoc} shows the estimates of $\rho_A$ and $\rho_B$ for both burst and non-burst periodograms;  we observe that connectivity tends to be stronger in the upper tail of the distribution when the signal is in the burst-phase $(\hat{\rho}_A = 0.43, \hat{\rho}_B = 0.83)$, but the converse holds for the non-burst phase $(\hat{\rho}_A = 0.26, \hat{\rho}_B = 0.16)$. This observation motivates us to develop a statistical model, based on the \underline{extremes} of periodograms, that will enable formal statistical inference on this discrepancy.

\begin{figure}
    \centering
    \includegraphics[width=1.001\textwidth]{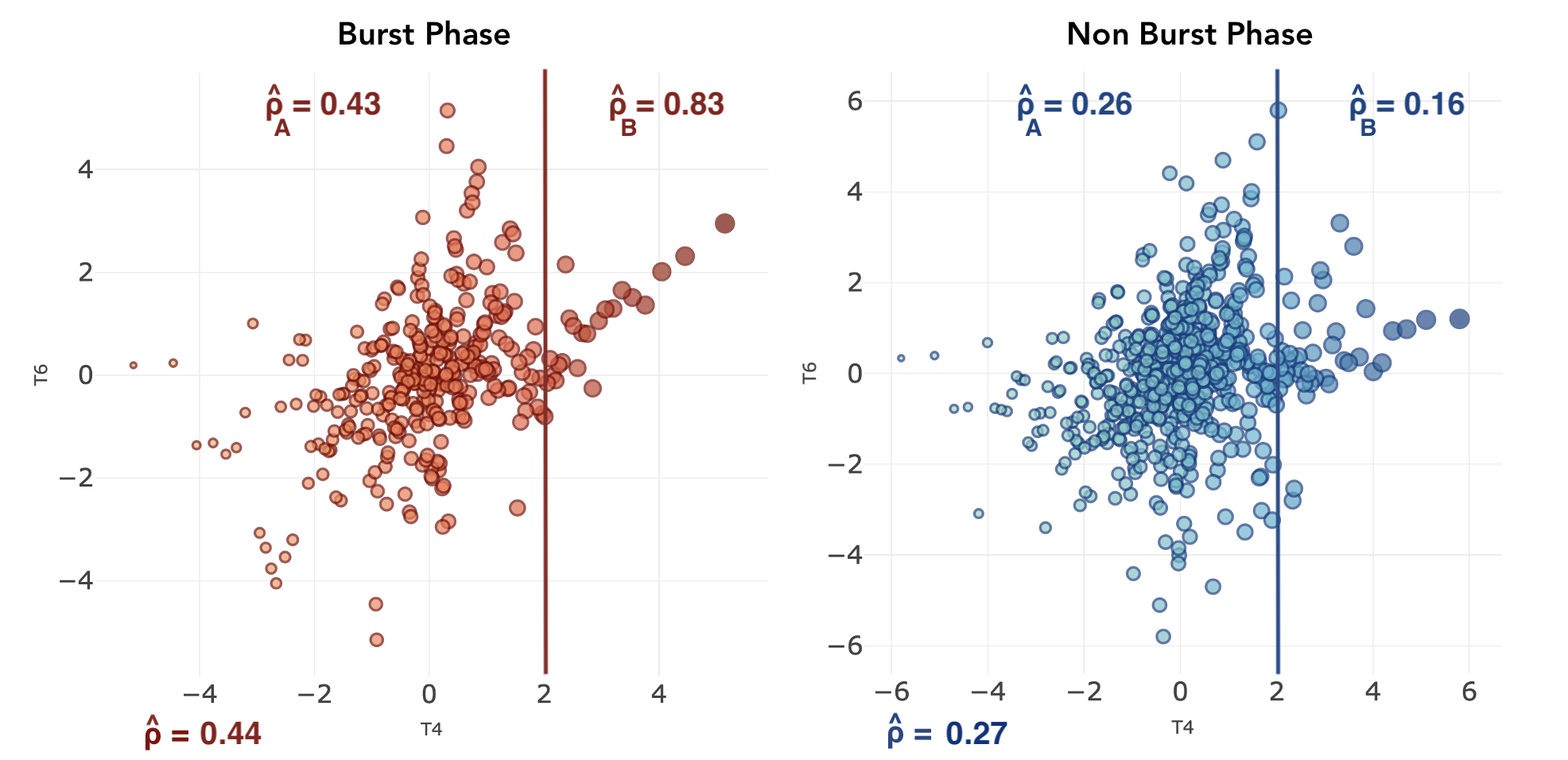}
    \caption{Periodogram of each block of pre-seizure EEG signals (on Laplace margins), classified as either burst (left) or non-burst phase (right). The coefficient provides the estimated strength of association in the entire ($\hat{\rho}$), bulk ($\hat{\rho}_A$) and tail ($\hat{\rho}_B$) of the distribution, for each phase.}
    \label{CondAssoc}
\end{figure}

Few statistical models have been developed at the interface between extreme value theory and spectral analysis of time series. Notable examples include modeling the tails of auto-periodograms using the generalized extreme value distribution, applied to EEG signals \citep{Quintero2020} and satellite data \citep{Baluev2008ExtPerio, Suveges2014ExtPerio}. While these methods are appropriate for univariate time series, they are not designed to capture dependence (across components) in multivariate time series.
More recently, \cite{Guerrero2023} provide an extreme value model for signal oscillations, introducing a conditional tail dependence measure for pairwise filtered series of seizure and pre-seizure brain signals, based on the conditional modeling framework proposed in \cite{Heffernan2004}. This was further generalized in \cite{Redondo2024}, which allowed the tail dependence parameters in \cite{Guerrero2023} to be a smooth function of time. In our approach, we develop the \textit{Spectral Extremal Connectivity} (SpExCon) model which has two key innovations: (i) we jointly model the tail behavior of multiple periodograms via a \textit{multivariate} \cite{Heffernan2004} model, rather than applying the model to pairs of filtered signals; and (ii) the proposed novel SpExCon model captures differences between two brain phases (i.e., burst vs non-burst). 
Thus, the SpExCon model is a spectral-based multivariate conditional extremes model designed to achieve two objectives: (i) modeling conditional tail dependence during two different brain phases, simultaneously; and (ii) identifying significant differences between tail dependence of burst and non-burst signals.  

The remainder of this paper is organized as follows. Section~\ref{sec:SpectralAssoc} provides an exploratory analysis 
of the observed signals using coherence, and discusses its limitation when considering changes in tail connectivity. 
Section~\ref{chap:methodology} details the SpExCon methodology, and includes the mathematical framework of the proposed tail dependence model. Section~\ref{chap:simulation} includes a simulation study that demonstrates the efficacy of our model. Section~\ref{chap:analysis} presents the key findings of the application of our model to five neonates in a NICU in Finland. The last section of our paper (Section~\ref{chap:conclusion}) summarizes the findings and highlights novel results. 

\section{Exploratory Coherence Analysis} \label{sec:SpectralAssoc}

Our prior experience with many EEG signals is that they tend to be non-stationary over very long stretches, i.e., their statistical properties, such as the mean, spectral matrix, and possibly dependence structure, can evolve over the course of the entire recording time \citep{Ombao2005slex, Fiecas2016B}. 
However, these non-stationary EEG signals can be adequately modeled as piecewise stationary processes \citep{Davis2006AutoParm}.
Consider the $b$-th time block whose time points are indexed by the set, $\mathcal{T}_b$ for $b = 1,\dots,B$, where the multi-channel EEG signal, $\{\bs{X}(t)\}_{t \in \mathcal{T}_b}$, is assumed to be weakly stationary. Hence, it admits a Cram\'er representation,
\begin{equation}
    \bs{X}(t) = \int_{-0.5}^{0.5} \bs{\mathcal A}_b(\omega) \exp(i2\pi\omega t) \rmd \bs{Z}_b(\omega), \; \; t \in \mathcal{T}_b, \label{cramer2}
\end{equation}
where $\omega \in (-0.5, 0.5)$ is the standardized frequency interval, $\bs{\mathcal A}_b(\omega)$ is the $p \times p$ complex-valued transfer function matrix on time block $b$, and $\rmd \bs{Z}_b(\omega) := (\rmd Z_{1,b}(\omega), \dots, \rmd Z_{p,b}(\omega))^\top$ is a $p$-dimensional random process with zero mean and whose covariance structure satisfies $ \Cov (\rmd \bs{Z}_b(\omega), \rmd \bs{Z}_b(\lambda)) = 0$ when $\omega \ne \lambda$ and  $ \Cov (\rmd \bs{Z}_b(\omega), \rmd \bs{Z}_b(\omega)) = {\mathbb I} \, \rmd\omega$, where ${\mathbb I}$ is the identity matrix. Here, the $p \times p$ Hermitian spectral matrix of $\{\bs{X}(t)\}_{t \in \mathcal{T}_b}$ is ${{\bm f}_b(\omega) := \bs{\mathcal A}_b(\omega) \bs{\mathcal A}^*_b(\omega)}$, where $\bs{\mathcal A}^*_b(\omega)$ denotes the complex-conjugate transpose of $\bs{\mathcal A}_b(\omega)$. The standard dependence measure between the $j$-th and $k$-th EEG channels in Equation~\eqref{cramer2}, i.e., coherence, is defined through the random coefficients $\rmd\mathcal{Z}_{j,b}(\omega) := \bs{\mathcal{A}}_{j,b}(\omega)\rmd \bs{Z}_{b}(\omega)$ for $j = 1, \dots, p$, where $\bs{\mathcal{A}}_{j,b}(\omega)$ is the $j$-th row of the $\bs{\mathcal A}_b(\omega)$ matrix.
An intuitive interpretation in terms of oscillations is explained in \cite{Ombao2022}, where the $\omega$-oscillation of $\{X_j(t)\}_{t \in \mathcal{T}_b}$ is $X_{j,\omega}(t) = \exp(i2\pi\omega t)\rmd \mathcal{Z}_{j,b}(\omega)$, for all $t \in \mathcal{T}_b$. 

Coherence between time series $\{X_j(t)\}_{t \in \mathcal{T}_b}$ and $\{X_k(t)\}_{t \in \mathcal{T}_b}$ in block $b$ and frequency band $\Omega_\ell$, is then defined as
\begin{eqnarray}
    \varrho_{jk}^{(\ell)}(b) & := & \frac{1}{|\Omega_\ell|} \int_{\{\omega : s\omega \in \Omega_\ell\}} \varrho_{jk}(b,\omega) \rmd\omega, \text{ where} \\
    \varrho_{jk}(b,\omega) & := & | \cor(X_{j, \omega}(t), X_{k, \omega}(t); t \in \mathcal{T}_b)|^2 = |\cor(\rmd \mathcal{Z}_{j,b}(\omega), \rmd \mathcal{Z}_{k,b}(\omega))|^2, \notag 
    \label{coherence2}
\end{eqnarray}
and $s$ is the sampling rate of the signal. Coherence (as well as partial coherence) has been widely used to characterize dependence in a brain network \citep{Fiecas2011A, Park2014estimating}.
One limitation of coherence is that it may miss features of dependence in the tails of the distribution. Figure~\ref{CondAssoc} demonstrated the important point that dependence strength between channels can vary between the tail and the bulk of the distribution, and can differ between burst and non-burst phases. To further investigate the limitations of coherence, we conduct an exploratory data analysis for one neonate. The EEG signals are segmented into 1000 quasi-stationary blocks, with each block of length two seconds and sampling rate $s = 256$ Hz, or 256 observations per second; hence $L = 512$. For each block, we estimate coherence at the Delta band, $\varrho_{jk}^{(1)}(b)$. Each block is labeled either as burst or non-burst (see the Supplementary Material, Section~\ref{appn2}, for details on the labeling algorithm). 
Our analysis focuses on the temporal lobes (channels T3, T4, T5, and T6) which is the common focal point of 95\% of seizures \citep{Medvedev2011Gamma}. 
\begin{figure}
    \centering
    \includegraphics[width=0.99\textwidth]{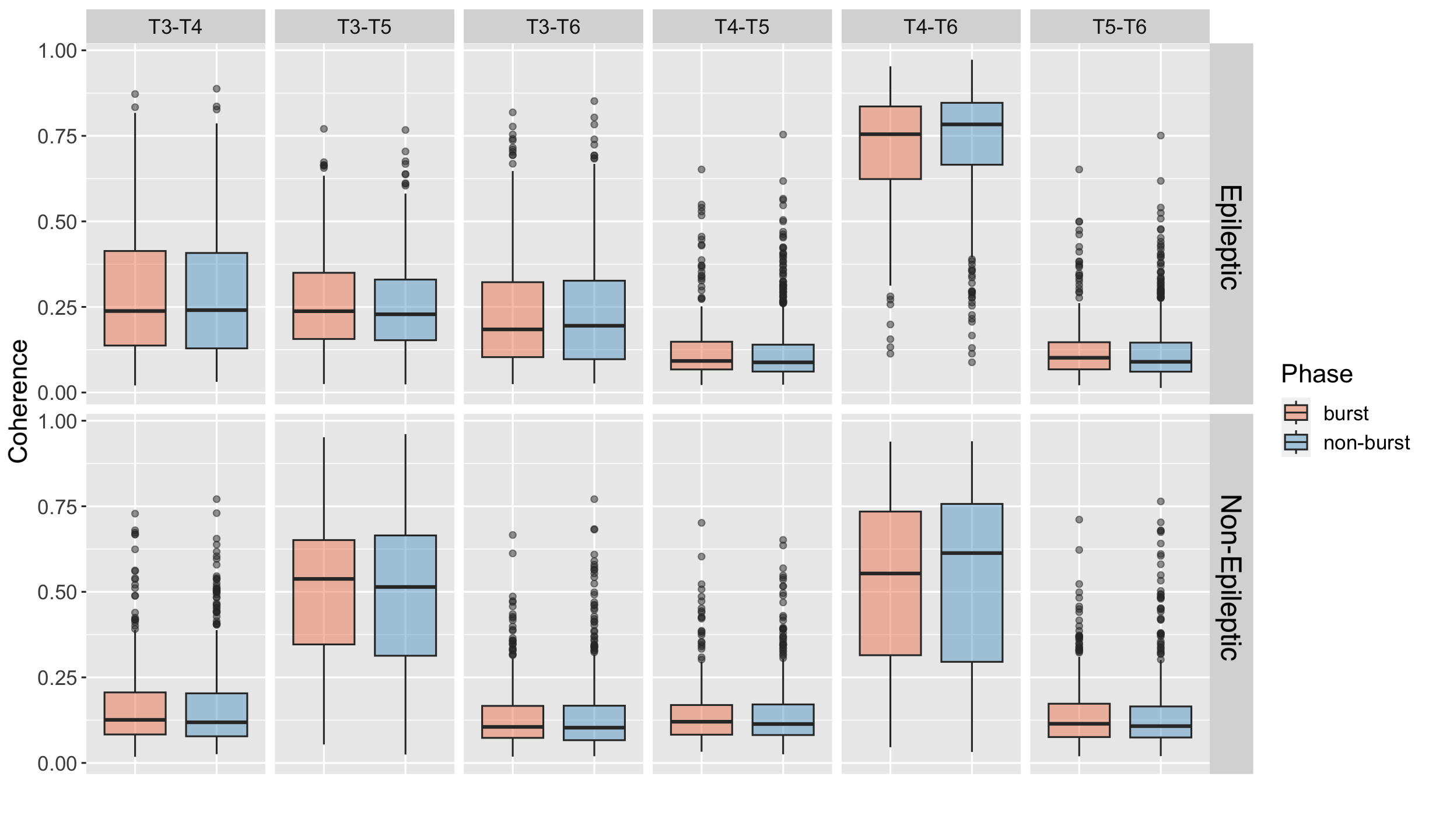}
    \caption{Boxplots of pairwise coherence estimates for channels T3--T6 of an epileptic (top) and a non-epileptic (bottom) patient. Coherences are estimated for burst (red) and non-burst (blue) signals of the Delta band, (0--4) Hz.}
    \label{DistCoh}
\end{figure}
Boxplots of the pairwise coherence estimates for the Delta-band ($\Omega_1$), i.e., $\hat{\varrho}_{jk}^{(1)}(b)$ for $b = 1, \dots, 1000$, displayed in Figure~\ref{DistCoh}, 
show no visually pronounced differences in the
distribution of the estimated coherence between the two phases. This analysis of coherence could lead one to conclude that dependence between the Delta oscillations at the temporal channels does not differ between the burst and non-burst phases. However, as observed in Figure~\ref{CondAssoc} and Section~\ref{chap:analysis}, differences in the strength of association can be detected when using appropriate measures tailored to the tails. This highlights the limitations of current methods and serves as our {motivation} to develop a new approach, that we call SpExCon (with details in Section 3), for analyzing spectral connectivity in brain signals which obscure the significant differences that are revealed only in the tails. 

\section{The SpExCon Method} \label{chap:methodology}

We describe, in Section~\ref{subsec:TimeFreq}, the time-frequency analysis under the \cite{Dahlhaus2000likelihood} model. 
Then, in Section~\ref{subsec:Model}, we develop our model for non-stationary multivariate spectral extremal connectivity (SpExCon). Furthermore, we test the significance of parameter estimates through a bootstrap resampling procedure discussed in Section~\ref{subsec:bootstrap}.  A flowchart schematic of the entire proposed SpExCon method is given in Figure~\ref{flowcart}. 


\begin{sidewaysfigure}
    \centering
    \includegraphics[width=0.99\textwidth]{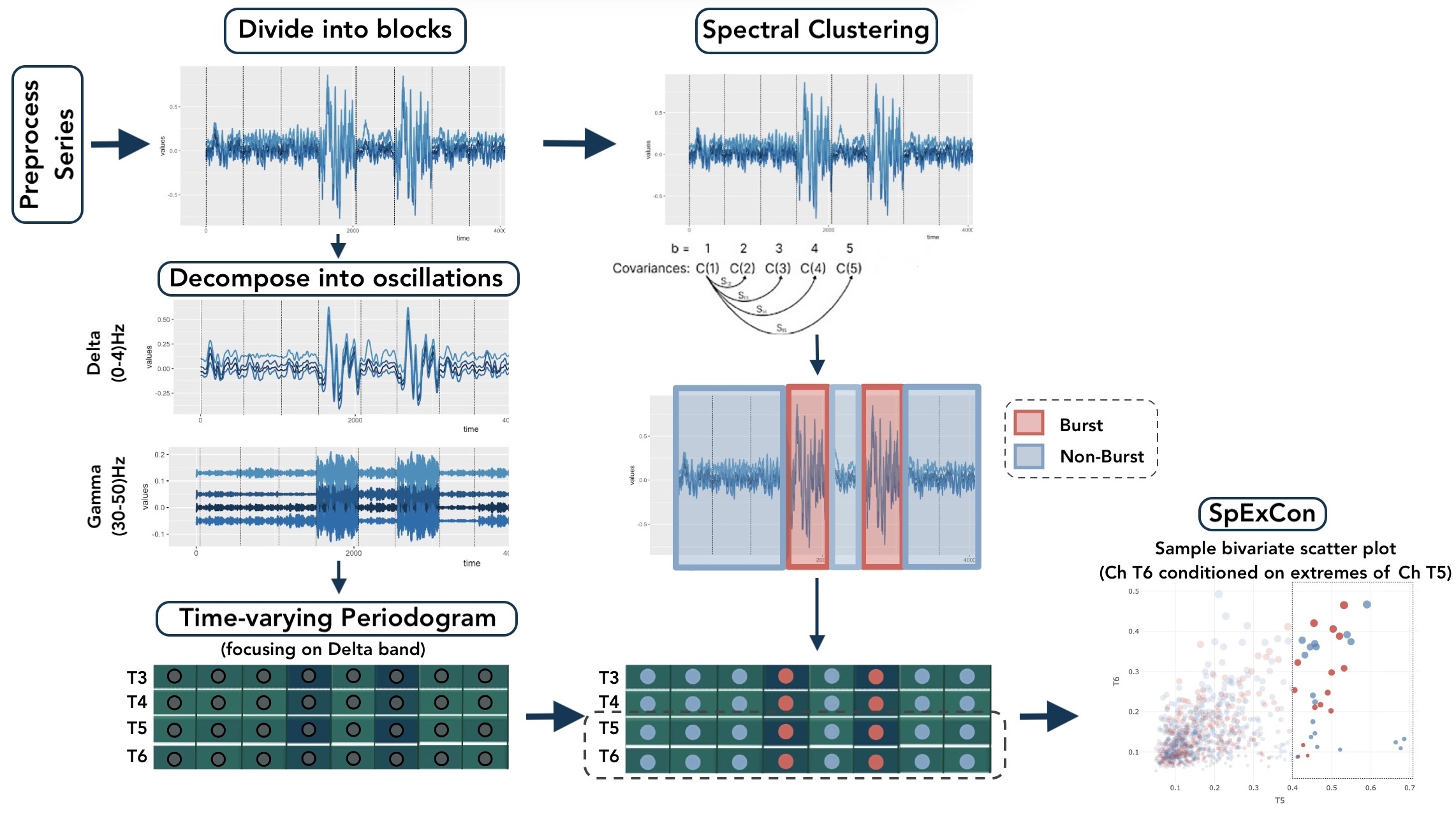}
    \caption{Schematic of the SpExCon modeling framework.}
    \label{flowcart}
\end{sidewaysfigure}

\subsection{Time-Frequency Analysis for Non-stationary Series} \label{subsec:TimeFreq}

The Cram\'er representation in Equation~\eqref{cramer2} was generalized by \cite{Priestley1967power} to allow the transfer function to change with time---we denote it here by $\bs{\mathcal A}(t,\omega)$. This was later refined in \cite{Dahlhaus2000likelihood}, showing that $\bs{X}(t)$ can be expressed as
\begin{equation}
    \bs{X}(t) = \int_{-0.5}^{0.5} \bs{\mathcal A}(u, \omega) \exp(i2\pi\omega t) \rmd \bs{Z}(\omega), \; \; \; t \in 1, \dots, T, \label{dahlhaus}
\end{equation}
where the transfer function $\bs{\mathcal A}(u, \omega)$ is a function of both frequency $\omega \in (-0.5, 0.5)$ and rescaled time $u = t/T \in [0,1]$. This framework allowed for the construction of mean-squared consistent estimators for the time-varying spectrum ${\bm f}(u, \omega) = \bs{\mathcal A}(u, \omega) \bs{\mathcal A}^*(u, \omega)$. 

A similar class of models treats the nonstationary multichannel EEG $\{{\bm X}(t)\}_{t = 1}^T$ as piecewise stationary. This class of models include the piecewise-stationary autoregressive models, the smooth localized complex exponentials (SLEX) model of nonstationary processes \citep{Ombao2005slex}, and the piecewise Cram\'er process, developed by \cite{Adak1998LPS}, which we adopt here. Under this model, the time series is segmented into piecewise stationary blocks. 
For simplicity, we assume that all blocks $b$ ($b=1, \ldots, B$) have equal length $L$. Block $b$ is given the label $D_b = 1$ if it is a ``burst" phase and $D_b = 0$ if it is a ``non-burst" phase. Labeling of the phases is performed using spectral clustering, see \cite{Narula2021}, and is detailed in the Supplementary Material, Section~\ref{appn2}. We collect all ``burst" blocks in $\mathcal{B}_1 := \{b : D_b =1 \}$, and similarly for the non-burst blocks, $\mathcal{B}_0$. Furthermore, we assume that the burst and non-burst phases have different spectral properties (and hence different Cram\'er representations). However, we assume that all blocks labeled as realizations of the ``burst" phase are independent copies of a particular process; and the same holds for blocks labeled as ``non-burst". 

Let the transfer function for the burst and non-burst phases be, respectively,  
$\bs{\mathcal A}^{1}(\omega)$ and $\bs{\mathcal A}^{0}(\omega)$ and the spectral matrix for the burst and non-burst phases be, respectively, 
\begin{equation*}
{\bm f}^1(\omega)  :=  \bs{\mathcal A}^{1}(\omega)\bs{\mathcal A}^{1*}(\omega) \ \ {\mbox{and}} \ \ 
{\bm f}^0(\omega)  :=  \bs{\mathcal A}^{0}(\omega)\bs{\mathcal A}^{0*}(\omega).
\end{equation*}
That is, we let the function $\bs{\mathcal A}(u, \omega)$ in \eqref{dahlhaus} be $\bs{\mathcal A}^{1}(\omega)$ for all $u = t/T$ such that $t$ is in the burst phase, i.e., $t \in \{\mathcal{T}_b :  D_b = 1, b = 1,\dots, B\}$ (and similarly for $\bs{\mathcal A}^{0}(\omega)$ and non-burst phases). 
From Equation~\eqref{dahlhaus}, we define the transfer function matrix in block $b$ as $\bs{\mathcal A}(t/T,\omega) := \bs{\mathcal A}_b(\omega) $ for all $t\in\mathcal{T}_b$, such that, in \eqref{cramer2},
\begin{equation}
    \bs{\mathcal A}_b(\omega) = \bs{\mathcal A}^{1}(\omega) D_b + \bs{\mathcal A}^{0}(\omega) (1 - D_b). \label{Abw}
\end{equation}
Hence, the true spectrum at block $b$ is ${\bm f}_b(\omega) := {\bm f}^1(\omega) D_b + {\bm f}^0(\omega) (1-D_b)$.

We now calculate the periodogram matrix for block $b$ \citep[i.e., the data analogue of the spectrum; see][]{brillinger2001time} for all time $t \in \mathcal{T}_b = \{(b-1)L + 1, \ldots, bL\}$.
The periodogram matrix for block $b$ at a fundamental frequency $\omega_r \in - \lfloor (L/2 -1) \rfloor, \ldots, -1, 0, 1, \ldots, \lfloor L/2 \rfloor$, denoted by ${\bm I}(b,\omega_r)$, consists of diagonal elements $I_{j}(b,\omega_r)$ (i.e., auto-periodograms) and off-diagonal elements $I_{jk}(b,\omega_r)$ (i.e., cross-periodograms). It can be obtained as 
\begin{equation}
   I_{j}(b,\omega_r) = a_{j,b}(\omega_r) a_{j,b}^{*}(\omega_r) \ \ {\mbox{and}} \ \ I_{jk}(b,\omega_r) = a_{j,b}(\omega_r) a_{k,b}^{*}(\omega_r), \label{TVPerio}
\end{equation}
where $a_{j,b}(\omega_r)$ denotes the discrete Fourier transform of $\{X_j(t)\}_{t \in \mathcal{T}_b}$, defined as
\[
a_{j,b}(\omega_r) := \frac{1}{\sqrt{L}} \sum_{t \in \mathcal{T}_b} X_j(t) \exp(-i 2 \pi \omega_r (t - [(b-1)L + 1])).
\]

\noindent To get consistent estimates, periodograms are typically smoothed. In this study, smoothing is performed by averaging the periodograms of neighboring frequencies. Let $n$ be such that $2n + 1 \ll L$. In this study, we take $n = 5$ for a block length of $L=512$. The smoothed (or averaged) auto-periodogram is then defined as $\tilde{I}_{j}(b,\omega_r) := \frac{1}{2n+1} \sum_{k = -n}^n I_{j}(b,\omega_r + k/L)$ \citep[see][for details]{Shumway2000time}.
We then compute the periodogram at the standard frequency bands $\Omega_{\ell}$ (for $\ell=1, \ldots, 5$) commonly used in practice, as follows.
For a given sampling rate $s$ and block length $L$ (for these EEG data, $s=256$ and $L=512$), the fundamental Fourier frequencies that correspond to the band $\Omega_{\ell} = (\Omega_{\ell}^{(1)}, \Omega_{\ell}^{(2)}]$ Hz is the set ${Q_{\ell} = \{\omega_r:  \ \ s\omega_r \in [-\Omega_{\ell}^{(2)}, -\Omega_{\ell}^{(1)}) \ \bigcup \ (\Omega_{\ell}^{(1)}, \Omega_{\ell}^{(2)}] \}}$.
Then the smoothed periodogram for the EEG channel $\{X_j(t)\}_{t \in \mathcal{T}_b}$ at block $b$ and frequency band $\Omega_{\ell}$ is 
\begin{equation}
{I}_{j}^{(\ell)}(b) := \frac{1}{|Q_{\ell}|} \sum_{
\omega_r \in Q_{\ell}} \tilde{I}_{j}(b, \omega_r). \label{TVPerio2}
\end{equation}
For simplicity, we use the same notation ${I}_{j}^{(\ell)}(b)$ as in \eqref{Eq:IntroPeriodograms1} and assume that the auto-periodogram is smoothed. 
The periodogram ${I}_{j}^{(\ell)}(b)$ measures the variance (energy) contribution of the frequency band $\Omega_{\ell}$ in $X_j$ to the total variance of the said $j$-th channel.

\subsection{Non-stationary Multivariate Conditional Spectral Extremes} \label{subsec:Model}
\label{sec3:DEC}
To study differences in the joint extremal behavior of the energy at frequency band $\Omega_{\ell}$ during the burst, $\mathcal{B}_1$, and non-burst phases, $\mathcal{B}_0$, we study extremes of the periodogram vector ${\bs I}^{(\ell)}(b) = (I_1^{(\ell)}(b), \dots, I_p^{(\ell)}(b))^\top$ (for $b \in \mathcal{B}_1$ and for $b \in \mathcal{B}_0$, respectively). 
To this end, we construct a two-state model for multivariate extremal dependence. Although many appropriate multivariate extremal dependence models have been developed \citep[see, e.g., the discussion by][]{papastathopoulos2023statistical}, we choose to exploit the multivariate conditional extremes framework of \cite{Heffernan2004} \citep[see, also,][]{heffernan2007, keef2013estimation}, as this framework can flexibly capture both positive and negative extremal dependence, and its parameter estimates are easy to interpret (relative to other models for multivariate extremes). Moreover, \cite{Guerrero2023} have already illustrated the efficacy of the \cite{Heffernan2004} framework for brain connectivity analysis, and \cite{richards2023joint} and \cite{tendijck2023modeling} have shown that the conditional model is useful for capturing mixture structures in (environmental) data.

Following \cite{keef2013estimation}, we consider a generic $p$-dimensional vector with Laplace margins by $\bs{Y} = (Y_1, \dots, Y_p)^\top \in\mathbb{R}^p$ (in our approach, these correspond to the block-wise periodograms, $\bs{I}^{(\ell)}(b)$ transformed to the Laplace scale). Let $\bs{Y}_{-q}\in\mathbb{R}^{p-1}$ be the vector $\bs{Y}$ with its $q^{th}$ component removed. The classical conditional model assumes that there exist some normalizing parameters $\boldsymbol{\alpha}_{-q} := \{{\alpha}_{j|q} : j \in (1,\dots,p) \setminus q\} \in [-1,1]^{p-1}$ and $\boldsymbol{\beta}_{-q} := \{\beta_{j|q} : j \in (1,\dots,p) \setminus q\} \in [0,1)^{p-1}$ such that, for $\bs{z}\in\mathbb{R}^{p-1}$ and $y>0$,
\begin{equation}
    \label{Eq:H+T}
    \mathbb{P}\left\{ \frac{\bs{Y}_{-q} - \bs{\alpha}_{-q}Y_{q}}{Y_{q}^{\bs{\beta}_{-q}}} \leq \bs{z}, Y_{q} - v > y \; \Big| \; Y_{q} > v \right\} \rightarrow G_{-q}(\bs{z})\exp(-y), \text{ as } v \rightarrow \infty, 
\end{equation}
where operations are taken componentwise and $G_{-q}(\cdot)$ is some non-degenerate $(p-1)$-variate distribution function. Modeling with Equation \eqref{Eq:H+T} follows by assuming that the asymptotic limit holds in equality for some fixed but large value $v>0$; in this case, conditional on $Y_q>v$, the vector $\bs{Y}_{-q}$ can be seen to follow the heteroskedastic regression model $(\bs{\alpha}_{-q}{Y}_{q} + Y_q^{{\bs \beta}_{-q}}{\bs Z}_{-q})\mid (Y_q>v)$, with residual vector ${\bs Z}_{-q} \sim G_{-q}$.

Values of $\boldsymbol{\alpha}_{-q}$ and $\boldsymbol{\beta}_{-q}$ determine the strength and type of tail dependence exhibited by $\bs{Y}$. For example, when the $j$-th entry of $\boldsymbol{\alpha}_{-q}$ is in $(0,1)$ or $(-1,0)$ we have, respectively, positive or negative extremal association between $Y_q$ and the $j$-th component of $\bs{Y}_{-q}$, with the strength of dependence increasing with the parameters' magnitude.
A positive extremal association indicates that high values of one variable are likely to coincide with high values of another. Conversely, a negative extremal association implies that high values of one variable tend to occur alongside low values of another.
When values of ${\alpha}_{j|q}$ and ${\beta}_{j|q}$ are equal to one (minus one) and zero, respectively, then the corresponding components, ${Y}_{j|q}$ and $Y_q$, are positively (negatively) asymptotically dependent; they are asymptotically independent, otherwise. 
For further details on the concept of asymptotic (tail) dependence and independence, see \cite{Heffernan2004} or \cite{huser2022advances}.

We use model \eqref{Eq:H+T} to study extremal dependence in the band-specific periodogram vectors. That is, we replace the generic vector $\bs{Y}$ in \eqref{Eq:H+T} with $\bs{I}^{(\ell)}(b) := (I_{1}^{(\ell)}(b), \dots, I_{p}^{(\ell)}(b))^\top$, transformed to Laplace margins, at frequency band $\Omega_\ell,\ell = 1,\dots,5,$ and for block $b = 1,\dots,B$. For brevity, we drop the superscript $\ell$ from notation $\bs{I}^{(\ell)}(b)$, as we hereafter perform all dependence modeling separately at each frequency band. We denote by $\bs{I}_{-q}(b)$ the vector $\bs{I}(b)$ with its $q^{th}$ component removed, and recall that the random variable $D_b$ takes value $d$ if $b\in \mathcal{B}_d$ for $d=0,1$. To account for differences in the burst and non-burst phase extremal dependence structures of $\bs{I}(b)$, we incorporate $D_b=d$ as a covariate in a non-stationary conditional extremes model; see, e.g., \cite{winter2016modelling} and \cite{richards2023modern}. Let 
\begin{equation}
    v:=\mathbb{F}^{-1}(\kappa)>0 \label{kappa}
\end{equation}
be a large threshold, taken to be the $\kappa$-th quantile of the standard Laplace distribution, $\mathbb{F}$, for $\kappa \in (0,1)$. Then, for block $b=1,\dots,B$ and $y>v$, our phase-dependent conditional model is
\begin{align}
     \bs{I}_{-q}(b) \; \bigg| \bigg(\; I_{q}(b) = y  ; D_b = d \bigg) 
    &= \boldsymbol{\alpha}_{-q}^{(d)} y +  y^{\boldsymbol{\beta}_{-q}} 
    \bs{Z}^{(d)}_{-q}\nonumber\\
    &= \tanh\left(\bs{a}_{-q} +  \bs{\Delta}_{-q} d  \right) y +  y^{\boldsymbol{\beta}_{-q}} 
    \bs{Z}^{(d)}_{-q},
        \label{DEC}
\end{align}
with operations taken {componentwise}, parameter vectors 
\begin{eqnarray*}
    \boldsymbol{a}_{-q} & = & \{a_{j|q} : j \in \{1, \dots, p \} \setminus q \} \in \mathbb{R}^{p-1}, \\
    \boldsymbol{\Delta}_{-q} & = & \{\Delta_{j|q}: j \in \{1, \dots, p \} \setminus q \} \in \mathbb{R}^{p-1}, \\
    \boldsymbol{\beta}_{-q} & = & \{ \beta_{j|q}:  j \in \{1, \dots, p \} \setminus q \} \in [0,1)^{p-1},
\end{eqnarray*}
and $d$-dependent residual vector $\bs{Z}^{(d)}_{-q} = \{ Z_{j|q}^{(d)}:  j \in \{1, \dots, p \} \setminus q \}  \sim G_{-q}^{(d)}$.
Here, the difference in strength of extremal dependence of the periodogram vector, between the burst ($d = 1$) and non-burst phase ($d = 0$), is determined through the parameter vector $\boldsymbol{\Delta}_{-q}$, which can be found in the vector of $d$-dependent extremal dependence parameters  
\begin{equation}
    \boldsymbol{\alpha}_{-q}^{(d)} = \tanh\left(\boldsymbol{a}_{-q} +  \boldsymbol{\Delta}_{-q} d \right) = \{ \alpha_{j|q}^{(d)}:  j \in \{1, \dots, p \} \setminus q \} \in [-1,1]^{p-1}.   \label{psi}
\end{equation}

Here, we have allowed the $\boldsymbol{\alpha}_{-q}$ parameter vector and the residual distribution  $G^{(d)}_{-q}$ of the \cite{Heffernan2004} model to vary with phase, but have specified a common $\boldsymbol{\beta}_{-q}$ parameter across (burst or non-burst) phases. While a simple extension of our model will allow for a $d$-dependent $ \boldsymbol{\beta}_{-q}$, we propose this simplified model to maximise interpretability of the parameter estimates. Moreover, an unreported sensitivity analysis of this choice was conducted, and found to have negligible impact on the results.

Inference for model \eqref{DEC} proceeds via maximum likelihood estimation, under the working parametric assumption that $G_{-q}^{(d)}$ is multivariate Gaussian with $d$-dependent mean vector and positive-definite covariance matrix. We estimate the marginal distributions of $\{I_j(b)\}_{b \in \mathcal{B}_d}$ empirically for each phase-type, $d=0,1$, and these are then used to standardize the time-varying auto-periodogram to standard Laplace margins \citep[see][]{keef2013estimation}. Exceedances above $v$, the $\kappa$-th standard Laplace quantile, are then used to fit model \eqref{DEC}; see, e.g., \cite{winter2016modelling} for full inference details. \cite{richards2023modern} advocate the use of QQ-plots to assess goodness-of-fit of the multivariate extremal dependence model, and to optimise the quantile level $\kappa$. We follow their lead and use estimates of model \eqref{DEC} to simulate, for each conditioning channel $q$, an aggregate random variable
\begin{eqnarray}
    R_b^{(-q)} : = \sum_{j \in \{1, \dots, p \} \setminus q } \left\{I_j(b) \mid(I_{q}(b) > v)\right\}. \label{aggR}
\end{eqnarray}
Note that we simulate from $\bs{I}_{-q}(b)\mid(I_{q}(b) > v)$ by drawing from ${\bs{I}_{-q}(b)\mid(I_{q}(b) > v, D_b=d)}$ where the probability $\mathbb{P}(D_b=d),d=0,1$, is estimated empirically. Moreover, to simulate $\bs{Z}_{-q}^{(d)}$, we do not use the working assumption that $G^{(d)}_{-q}$ is Gaussian; we instead draw from the empirical residual distribution, with the residuals computed by substituting observations of the multivariate periodogram and estimates of parameter vectors $\bs{\alpha}_{-q}^{(d)}$ and $\bs{\beta}_{-q}$ into Equation~\eqref{DEC}.
Goodness-of-fit of the SpExCon model is then assessed visually, by noting deviation from the diagonal line in a QQ-plot of simulated against empirical values of $\{R_b^{(-q)}\}_{b=1}^B$.



\subsection{Bootstrap Procedure} \label{subsec:bootstrap}

To quantify parameter estimation uncertainty, we employ a two-stage resampling technique. As noted in Section~\ref{subsec:TimeFreq}, the EEGs are segmented into distinct blocks, $b = 1, \dots, B$. The first-stage of our bootstrap is a random selection of $B$ blocks with replacement from the pool of ``burst" ($\mathcal{B}_1$) and ``non-burst" blocks ($\mathcal{B}_0$), based on their corresponding probabilities, $\mathbb{P}(D_b = d) = \mathbb{P}(b \in \mathcal{B}_d) := \frac{|\mathcal{B}_d|}{B}$, $d = 0,1$. 
After taking a sample of $B$ blocks, denoted by $\mathcal{B}^{boot}$, we perform a stationary bootstrap \citep{Politis1994stationary} within each block in $\mathcal{B}^{boot}$. 
This involves concatenating sub-blocks from the original time series, starting at a random time point $t_m$ and lasting for a random duration $L_m$ drawn from geometric distribution with mean $h^*$, i.e., $L_m \sim \text{Geom}(1/h^*)$. The mean of this geometric distribution, $h^*$, is usually chosen as the maximum lag that has significant autocorrelation. This procedure is repeated until the bootstrap time series is of the same length as the original, i.e., $L$; see Figure~\ref{flowcart2} for an illustration.

\begin{figure}
    \centering
    \includegraphics[width=0.99\textwidth]{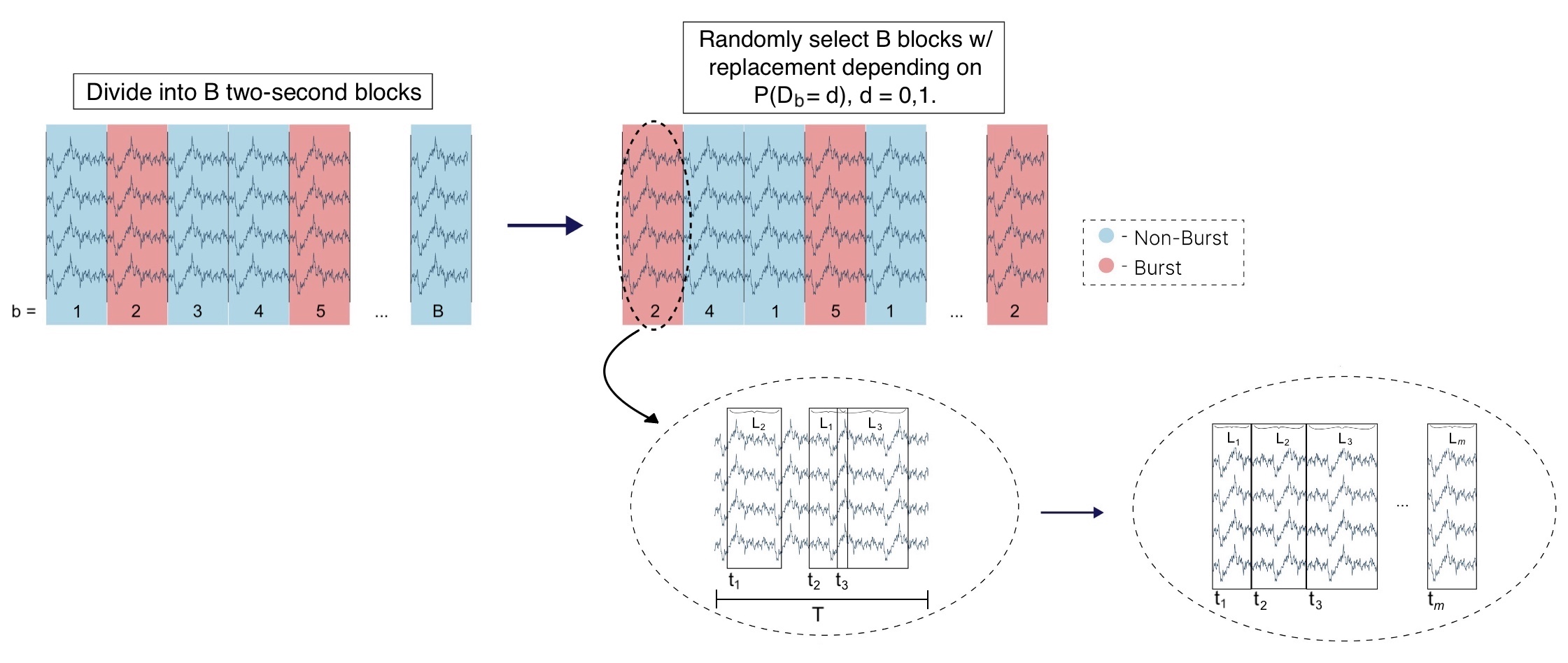}
    \caption{Schematic of the bootstrap resampling framework.}
    \label{flowcart2}
\end{figure}




We then refit our model to each bootstrapped EEG dataset.
Model estimates yield the bootstrap sampling distributions of the parameters, $\boldsymbol{\alpha}_{-q}^{(d)} \in [-1,1]^{p-1}$, ${\bs \beta}_{-q} \in [0, 1)^{p-1}$, and $\boldsymbol{\Delta}_{-q} \in \mathbb{R}^{p-1}$, which are then used to obtain percentile-based bootstrap confidence intervals. The significance level of the confidence intervals is adjusted to control for false coverage-statement rate (FCR) using the approach of \cite{Benjamini2005FDR}.


\section{Simulation Study} \label{chap:simulation}

We now perform a simulation study to show the efficacy of the SpExCon method in identifying changes in the spectral extremal dependence structure of multivariate time series. Section~\ref{subsec:SimSettings} details
a time series model for EEG data with known extremal dependence structure at the frequency oscillation bands, $\Omega_\ell$ for $\ell = 1, \dots, 5$. We use this model in Section~\ref{subsec:SimResult}, which presents the result of our simulation studies, to showcase the accuracy of estimating the parameter $\Delta$ (i.e., comparing the true value with the empirical distribution of the estimator), as well as the ability of SpExCon to identify burst and non-burst phases.

\subsection{True underlying model} \label{subsec:SimSettings}
To mimic the burst and non-burst phases of EEG data, we simulate from a multivariate process mixture which exhibits two different (phase-dependent) covariance structures; we use a mixture of second-order autoregressive processes (AR(2)), defined for $t \in 1, \dots, T$, as
\begin{equation}
    \bs{X}(t) = \bs{C}(t)\bs{O}(t) + \bs{W}(t),   \label{rawsim}
\end{equation}
where $\bs{C}(t) \in \mathbb{R}^{p \times 5}$ is constant within a block, so we write with some abuse of notation, $\{\bs{C}(t)\}_{t \in \mathcal{T}_b} := \bs{C}_b$. The vector $\bs{O}(t) = [O_1(t), \ldots, O_5(t)]^\top$ contains independent real-valued time-series $\{O_{\ell}(t)\}_{t = 1}^T$, $\ell = 1, \dots, 5$, each of which is an AR(2) process that corresponds to a frequency-band-specific oscillation, $\Omega_\ell$. Finally, $\bs{W}(t) \in \mathbb{R}^{p}$ is an independent vector of white-noise processes, such that $\mathbb{E}[W_j(t)] = 0$ and $\mathbb{V}[W_j(t)] <\infty$, for all $j = 1, \dots, p$. 
The AR(2) process $\{O_{\ell}(t)\}_{t = 1}^T$ is parameterized by $\theta_{1,\ell} \in \mathbb{R}$ and $\theta_{2} \in \mathbb{R}$, such that, for all $t = 1, \dots, T$,
$O_{\ell}(t):= \theta_{1,\ell}O_{\ell}(t-1) + \theta_{2}O_{\ell}(t-2) + \varepsilon(t),$
where $\varepsilon(t)$ is a zero-mean white noise process. 
The AR(2) parameters are defined as $\theta_{1,\ell} = 2\exp(-M)\cos(2\pi |\omega_\ell|)$ and $\theta_{2} = -\exp(-2M)$, where $\omega_{\ell} \in \Omega_{\ell}/s$, $M = 1.05$, and $s = 256$ Hz for the sampling rate \citep[see details in,][]{Ombao2022}. 
Specifically, we consider $\omega_\ell$ to be ${\{\omega_\ell : s\omega_\ell \in \{2, 6, 10, 20, 40\} \}}$.
Notice that $s\omega_\ell$ belongs to any of the frequency bands $\Omega_\ell$ defined in Section~\ref{chap:introduction}.

For this mixture process, we can show that the tail dependence of its corresponding multivariate block-wise periodogram, $\bs{I}^{\ell}(b)$, at frequency band $\Omega_{\ell}$, is reflected through the tail dependence of elements of $\bs{C}_b$. In the Supplementary Material, Section \ref{appn1}, we show that tail dependence in the periodograms of $\bs{X}(t)$ in Equation~\eqref{rawsim} is driven solely by tail dependence in the random matrix $\bs{C}_b$. By defining a suitable model for $\bs{C}_b$, we can induce a time series $\{\bs{X}(t)\}_{t = 1}^T$ with known spectral tail dependence properties; that is, we can simulate from a multivariate time series model for which we know the true SpExCon parameters. 
\begin{proposition} \label{Prop:Simulation}
   Consider a `locally stationary' $p$-dimensional time series $\{\bs{X}(t)\}_{t=1}^T$ so that the time series is approximately stationary for distinct blocks of length $L$, indexed by time points $\mathcal{T}_b := \{(b-1)L+1 , \dots , bL\}, b = 1, \dots, B$, such that $T = LB$. Specifically, let ${\bs{X}(t) = \bs{C}(t)\bs{O}(t) + \bs{W}(t),}$ for $t \in \{\mathcal{T}_b\}_{b = 1}^B$, where $\bs{C}(t) \in \mathbb{R}^{p \times 5}$ is assumed to be constant within a time block $\mathcal{T}_b$, i.e., $\bs{C}((b-1)L+1) = \dots = \bs{C}(bL)$, for $b \in \mathcal{B}_d$, and $d = 0, 1$, such that $\mathcal{B}_0 \cap \mathcal{B}_1 = \emptyset$ and $\mathcal{B}_0 \cup \mathcal{B}_1 = \{1, \dots, B\}$. Moreover, assume the vector $\bs{O}(t) = [O_1(t), \ldots, O_5(t)]^\top$ is comprised of mutually-independent AR(2) processes, and $\bs{W}(t) \in \mathbb{R}^{p}$ is an independent vector of white-noise processes.
    Define the vector $\tilde{\bs{C}}_{\ell}(b) := (\tilde{C}_{1\ell}(b), \dots, \tilde{C}_{p\ell}(b))^\top$, for a fixed $\ell \in \{1, \dots, 5\}$ and $b \in \mathcal{B}_d$, such that 
    $\tilde{C}_{j\ell}(b) := \{(C_{j\ell}(t))^2: t \in \mathcal{T}_b\} $, for $j = 1, \dots, p$. 
    Assume that $\{\tilde{\bs{C}}_{\ell}(b)\}_{b \in \mathcal{B}_d}$, for $d = 0, 1$, has a Gaussian copula with exponential margins, and let $\rho^{(\ell,d)}_{qj} > 0$ denote the correlation between $\{\tilde{C}_{q\ell}(b)\}_{b \in \mathcal{B}_d}$ and $\{\tilde{C}_{j\ell}(b)\}_{b \in \mathcal{B}_d}$. Then, for a \underline{fixed $\ell$}, the SpExCon tail dependence parameters in \eqref{DEC} are $\alpha^{(d)}_{j|q} = (\rho^{(\ell,d)}_{qj})^2$, $a_{j|q} = \tanh^{-1}\{(\rho^{(\ell,0)}_{qj})^2\}$, $\beta_{j|q} = 1/2$ and $\Delta_{j|q} = \tanh^{-1}\{(\rho^{(\ell,1)}_{qj})^2\} - \tanh^{-1}\{(\rho^{(\ell,0)}_{qj})^2\}$.
\end{proposition}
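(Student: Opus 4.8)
The plan is to prove Proposition~\ref{Prop:Simulation} in three stages: (i) reduce the extremal dependence structure of the band-$\Omega_\ell$ periodogram vector $\bs{I}^{(\ell)}(b)$ to that of the squared-coefficient vector $\tilde{\bs{C}}_\ell(b)$; (ii) use the assumed Gaussian copula to identify the multivariate conditional extremes parameters of $\tilde{\bs{C}}_\ell(b)$ on Laplace margins; and (iii) match these against the $\tanh$ parametrization in \eqref{psi} to read off $a_{j|q}$ and $\Delta_{j|q}$. Since the quantities to be identified ($\alpha^{(d)}_{j|q}$, $\beta_{j|q}$, $a_{j|q}$, $\Delta_{j|q}$) are all componentwise, it suffices to carry out the analysis pairwise for each conditioning index $q$ and response index $j$.

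For stage (i), I would first note that within block $b$ the matrix $\bs{C}(t)$ is constant, so $X_j(t) = \sum_{m=1}^5 C_{jm}(b)\,O_m(t) + W_j(t)$. Because each $O_m$ is an AR(2) process concentrated in band $\Omega_m$, filtering to $\Omega_\ell$ isolates the $O_\ell$ term and gives $X_{j,\Omega_\ell}(t) \approx C_{j\ell}(b)\,O_\ell(t)$ up to spectral leakage and the filtered white noise. Substituting into \eqref{TVPerio2} and using the ergodic average $L^{-1}\sum_{t\in\mathcal{T}_b}O_\ell(t)^2 \to \sigma^2_{O_\ell}$, I obtain $I^{(\ell)}_j(b) = \sigma^2_{O_\ell}\,\tilde{C}_{j\ell}(b)\,(1+o_P(1))$, where the constant $\sigma^2_{O_\ell}$ is common across channels $j$. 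Since multiplication by a common positive scalar leaves the copula invariant, the joint (hence extremal) dependence of $\bs{I}^{(\ell)}(b)$ coincides, in the large-$L$ limit, with that of $\tilde{\bs{C}}_\ell(b)$. This is exactly the statement established in Section~\ref{appn1}, which I would invoke rather than re-derive.

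For stages (ii)--(iii), by assumption $\tilde{\bs{C}}_\ell(b)$ for $b\in\mathcal{B}_d$ has a Gaussian copula, and because the conditional extremes parameters are copula functionals I may transform all margins to standard Laplace without changing them; here $\rho^{(\ell,d)}_{qj}$ is taken to be the correlation of that latent Gaussian copula. For a pair with Gaussian copula of correlation $\rho$ on Laplace margins, the standard normal-tail expansion---writing $Y \approx X^2/2$ on the upper tail, with $X_j\mid X_q = x \sim \mathcal{N}(\rho x, 1-\rho^2)$ so that $Y_j = \rho^2 Y_q + \rho\sqrt{2(1-\rho^2)}\,Y_q^{1/2}W + O_P(1)$---shows that the normalization in \eqref{Eq:H+T} holds with $\alpha = \rho^2$ and $\beta = 1/2$; I would cite this from \cite{Heffernan2004} and \cite{keef2013estimation}. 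In the multivariate conditioning on $I_q(b)$ large, the latent Gaussian conditional mean of each $X_j$ is $\rho^{(\ell,d)}_{qj}x$, independent of the remaining coordinates, so each $\alpha^{(d)}_{j|q}$ is determined pairwise and equals $(\rho^{(\ell,d)}_{qj})^2$, with $\beta_{j|q}=1/2$. Inverting \eqref{psi}, $\alpha^{(0)}_{j|q}=\tanh(a_{j|q})$ gives $a_{j|q}=\tanh^{-1}\{(\rho^{(\ell,0)}_{qj})^2\}$, and $\alpha^{(1)}_{j|q}=\tanh(a_{j|q}+\Delta_{j|q})$ then yields $\Delta_{j|q}=\tanh^{-1}\{(\rho^{(\ell,1)}_{qj})^2\}-\tanh^{-1}\{(\rho^{(\ell,0)}_{qj})^2\}$, completing the identification.

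The hard part will be the rigor of stage (i): controlling, uniformly across channels, the three error sources absorbed into the $o_P(1)$ term---spectral leakage of the off-band oscillations $O_m$ ($m\neq\ell$) into $Q_\ell$, the contribution of the filtered white noise $\bs{W}$, and the fluctuation of $L^{-1}\sum_{t\in\mathcal{T}_b}O_\ell(t)^2$ about $\sigma^2_{O_\ell}$---and verifying that these perturbations leave the limiting copula (and in particular the conditional tail) unchanged. A secondary subtlety is the meaning of $\rho^{(\ell,d)}_{qj}$: the clean identity $\alpha=\rho^2$ requires $\rho$ to be the latent Gaussian-copula correlation rather than the Pearson correlation of the exponential margins, so I would make this convention explicit at the outset. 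Granting stage (i) and this convention, stages (ii)--(iii) are routine algebra.
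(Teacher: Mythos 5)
Your proposal follows essentially the same route as the paper's own proof: reduce the band-specific periodogram to the exact-form decomposition $I_j(b,\Omega_\ell) = \tilde{C}_{j\ell}(b)\, I_{O}(\Omega_\ell) + I^{W}_j(\Omega_\ell)$ with a factor common across channels, so that tail dependence is inherited from $\tilde{\bs{C}}_{\ell}(b)$; then invoke the \cite{Heffernan2004} result that a Gaussian copula with correlation $\rho^{(\ell,d)}_{qj}>0$ gives $\alpha^{(d)}_{j|q} = (\rho^{(\ell,d)}_{qj})^2$ and $\beta_{j|q}=1/2$; and finally invert the $\tanh$ link in \eqref{psi} to obtain $a_{j|q}$ and $\Delta_{j|q}$. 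The paper's stage-(i) argument is no more rigorous than yours (it, too, drops cross-band and noise--signal cross-terms heuristically), so your explicit flagging of the $o_P(1)$ error sources and of the latent-Gaussian-correlation (rather than Pearson-on-exponential-margins) reading of $\rho^{(\ell,d)}_{qj}$ merely sharpens the same argument.
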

The proof of Proposition~\ref{Prop:Simulation} can be found in the Supplementary Material, Section~\ref{appn1}.


\begin{figure}
    \centering
    \includegraphics[width=0.83\textwidth]{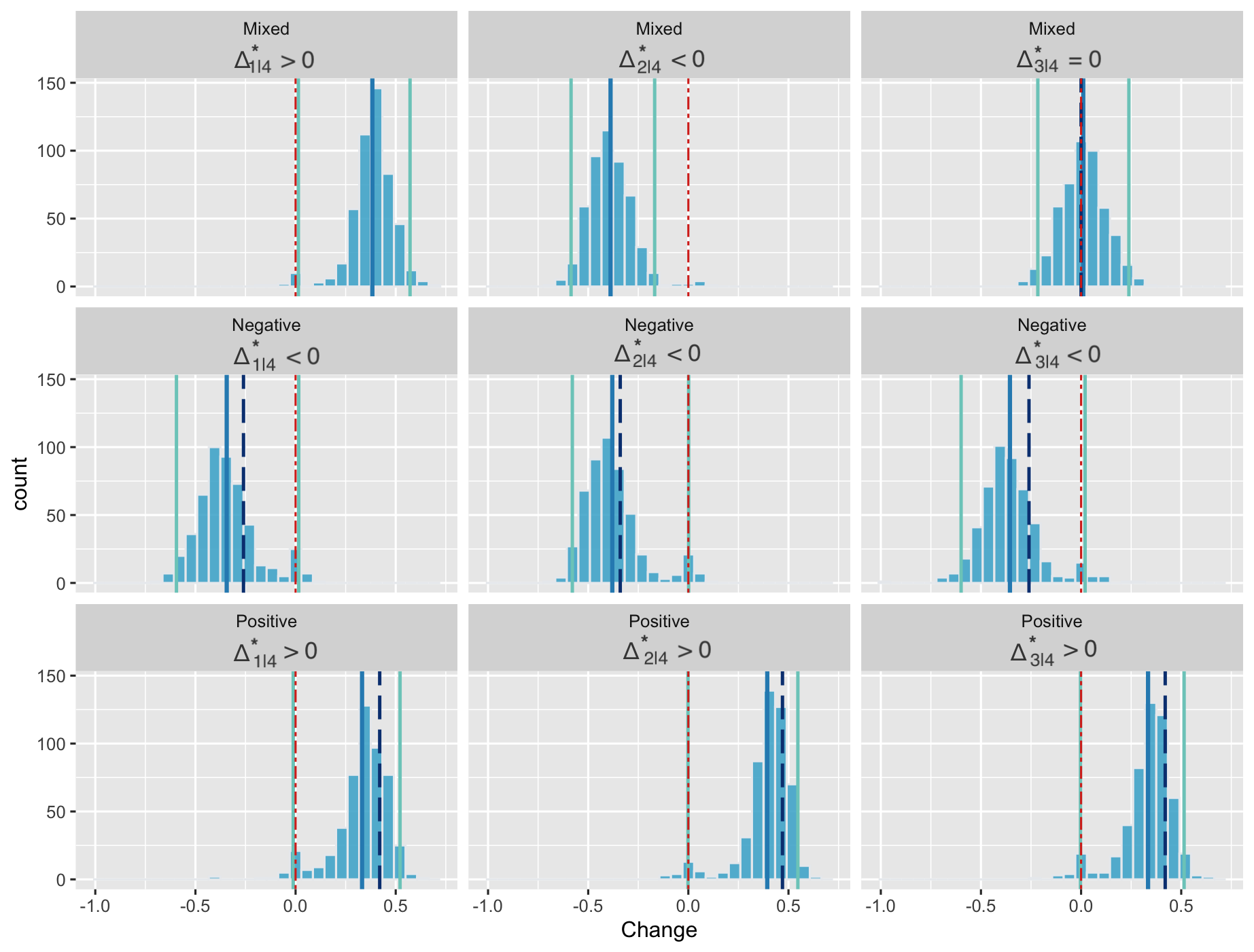}
    \includegraphics[width=0.83\textwidth]{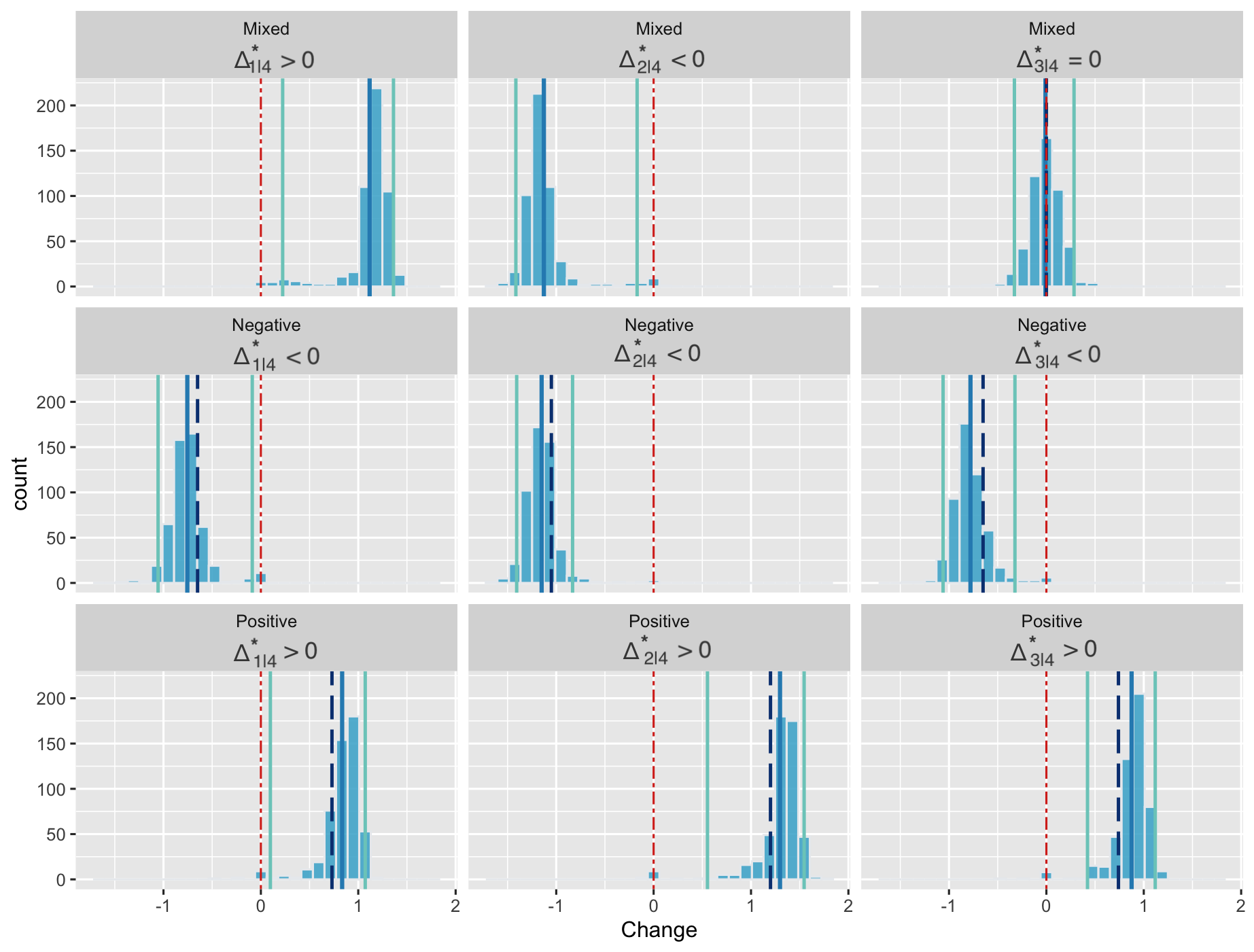}
    \caption{Simulation study: empirical distribution of estimates of $\Delta^{*}_{j|q} = \alpha^{(1)}_{j|q} - \alpha^{(0)}_{j|q}$, $q=4$, $j = 1,2,3$ (left to right) for three scenarios---with relatively low changes (first three rows) and relatively high changes (last three rows). The blue-solid vertical line shows the mean estimate; the light-blue-solid vertical lines are the 95\% corresponding confidence intervals, 
    the dashed-blue vertical line shows the simulated true values; 
    and the red vertical line is at zero.
    }
    \label{SimDelta}
\end{figure}

\subsection{Simulation results} \label{subsec:SimResult}

To show estimation accuracy, we simulate a $p=4$ dimensional $\bs{X}(t)$ from the true model defined in Proposition~\ref{Prop:Simulation} and in \eqref{rawsim}, with three settings for $\Delta_{j|q}$. The settings are: (i) ``positive'' ($\Delta_{j|q} > 0$ for all $j$); (ii) ``negative'' ($\Delta_{j|q} < 0$ for all $j$); (iii) ``mixed'' ($\Delta_{j|q} \in \mathbb{R}^3$, for all $j$). We also test the estimation performance when the change is relatively low (i.e., $\Delta_{j|q} \leq 0.5$) and relatively high (i.e., $\Delta_{j|q} > 0.5$); this gives us a total of six simulation settings. The values for $\Delta_{j|q}$ are selected in such a way that the correlation matrix of $\tilde{\bs{C}}_{\ell}(b)$ in Proposition~\ref{Prop:Simulation} is symmetric positive semi-definite. 
We simulate cluster allocations (i.e., $D_b = d$) uniformly-at-random with probability 0.35 for all $b = 1, \dots, B$. This probability reflects the observed frequency of burst-phase in neonates (see discussion in Section~\ref{subsec:clusteringResult}). We generate $B=1000$ blocks, each with fixed block length of $L = 512$, resulting in a total of $512,000$ time points. We then apply our SpExCon method with a threshold exceedance probability of $\kappa = 0.9$ in Equation~\eqref{kappa}. The experiment is repeated 1000 times and the results are summarized in Figure~\ref{SimDelta}, which shows the distribution of estimates for the three-dimensional vector $\Delta^{*}_{j|q} = \alpha^{(1)}_{j|q} - \alpha^{(0)}_{j|q}$. For all cases, the 95\% bootstrap confidence interval includes of the true value of $\Delta^{*}_{j|q}$. Moreover, we observe that, for the case of high values of true $\Delta^{*}_{j|q}$, the estimates are farther away from zero.
For labeling of burst and non-burst in this simulation study, the spectral clustering algorithm detailed in the Supplementary Material, Section~\ref{appn2}, achieves an average test accuracy of $99.98\%$ and an overall accuracy of $99.82\%$ (training and test sets combined).


\section{EEG Analysis}\label{chap:analysis}

\subsection{Overview}
Recall that the data in our study are from neonates admitted to the intensive care unit of the BAby Brain Activity Center (BABA Center), Finland.
We consider neonates who did and did not experience seizures.
We limit the inclusion criteria of this study to neonates who are older than 38 weeks and who have primary localization of seizure only in the right hemisphere of the brain. As our goal is to identify differences in the spectral tail dependence of stable EEG signals of epileptic and non-epileptic patients, we considered neonates with at least 1000 seconds of seizure-free observations. Note that we only include pre-seizure observations in our analysis. Given these criteria, we include two epileptic and three non-epileptic neonates in our analysis. The structure of the rest of this section is as follows. Section~\ref{subsec:clusteringResult} gives the results of spectral clustering for identifying burst and non-burst phases. Section~\ref{subsec:Estimates} discuss the SpExCon results for epileptic (Section~\ref{subsubsec:ictal}) and for non-epileptic neonates (Section~\ref{subsubsec:nonictal}). 

\begin{figure}
    \centering
    \includegraphics[width=0.99\textwidth]{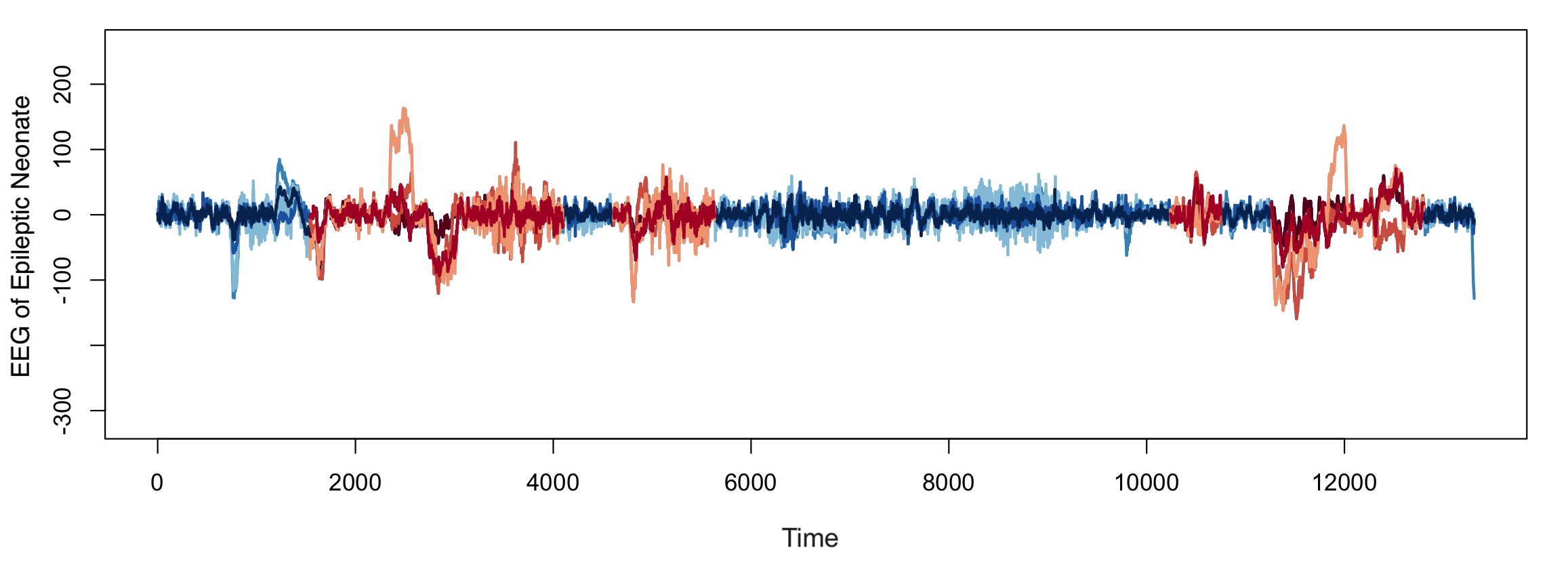}
    \includegraphics[width=0.99\textwidth]{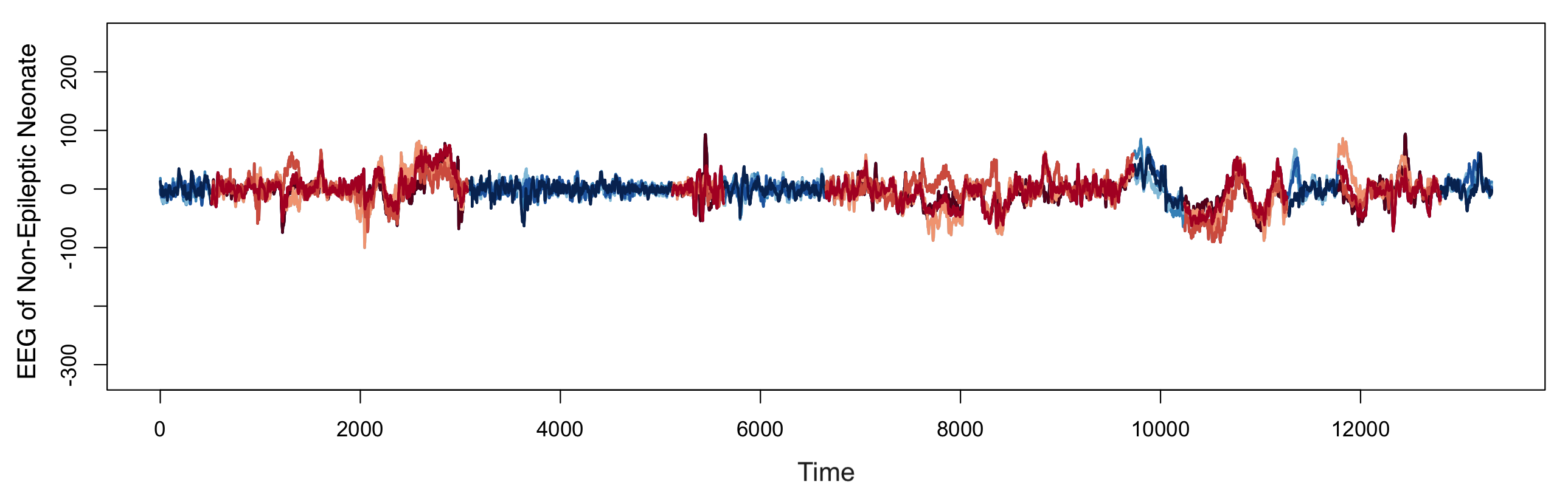}
    \caption{Electroencephalogram at temporal lobe of an epileptic patient (top) and non-epileptic patient (bottom). Red signals, flagged by the spectral clustering algorithm described in the Supplementary Material, Section~\ref{appn2}, are in burst-phase.}
    \label{EEGsubj31}
\end{figure}

\subsection{Spectral Clustering} \label{subsec:clusteringResult}

We first apply the spectral clustering described in \cite{Narula2021} (see the Supplementary Material, Section~\ref{appn2}, for details) to identify burst and non-burst phases. After clustering, we obtain a series of binary indicators $\{D_b\}_{b = 1}^B$ of length $B = 1000$. Figure~\ref{EEGsubj31} shows a sample result of the clustering where signal blocks are classified as burst-suppression ($D_b=1$) or non-burst-suppression ($D_b=0$). For the five neonates, the proportion of burst phases 
are 13.8\%, 28.4\%, 34.2\%, 37.1\%, and 41.8\%. We expect the proportion of burst-phase blocks to be lower, as it occurs less frequently than the non-burst phase \citep{Narula2021}. 
Alongside identification of the phases, we obtain the block-wise periodogram, defined in Equation~\eqref{TVPerio2}, using two-second moving epochs (i.e., $L=512$).

\begin{figure}
    \centering
    \includegraphics[width=0.9\textwidth]{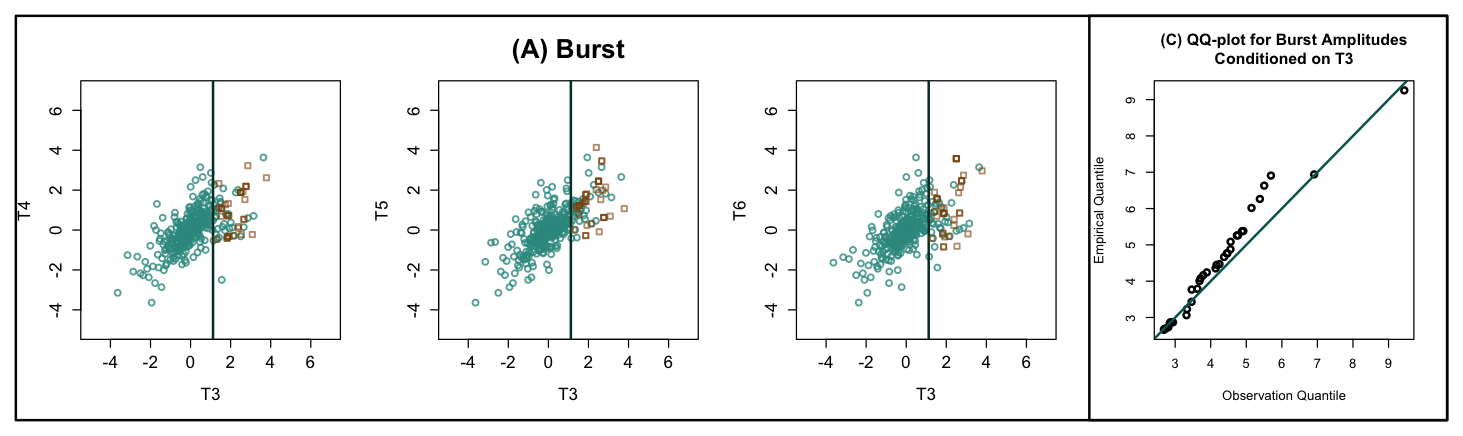}
    \includegraphics[width=0.9\textwidth]{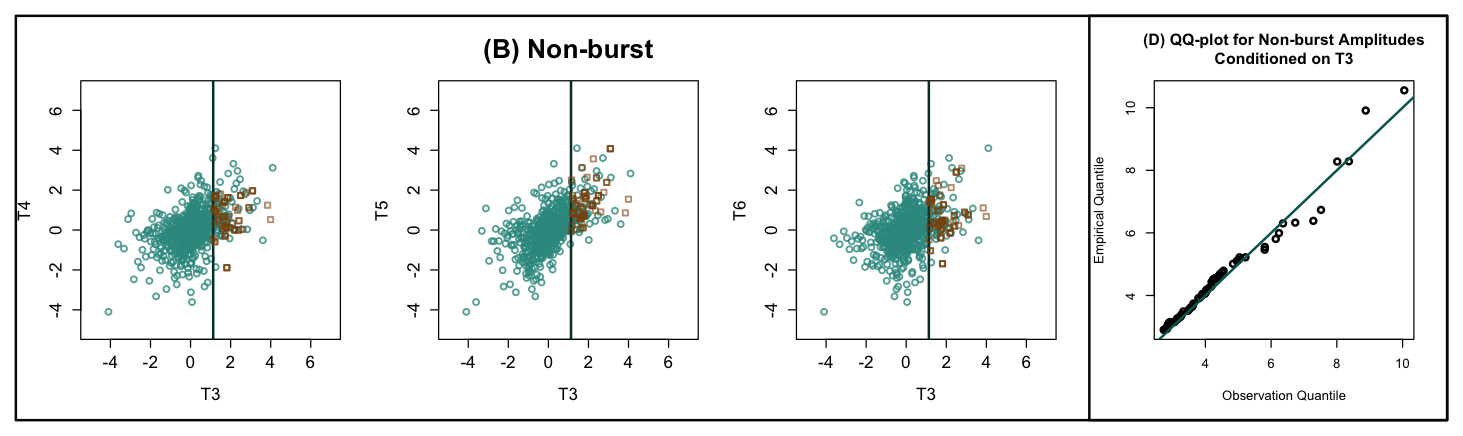}
    \caption{Goodness-of-fit diagnostics for SpExCon estimates. Panels (A) and (B) give scatterplots of simulated (brown squares) and observed (blue circles) values---both transformed to Laplace margins---of the block-wise auto-periodograms for the Delta-band of an epileptic neonate. Solid lines in (A) and (B) separate the upper 10\% and lower 90\% of the observed data. The QQ-plots in (C) and (D) compare the aggregate sum of the observed and simulated auto-periodograms conditioned on channel T3 being large. }
    \label{DeltaFitT5}
\end{figure}

\subsection{SpExCon Estimates} \label{subsec:Estimates}

We next model extreme auto-periodograms for each frequency bands; we use the model defined in Equation~\eqref{DEC} with $\kappa = 0.9$ in Equation~\eqref{kappa}.
For conciseness, we present only results for the delta band, as it is highly linked to seizures and burst-suppression patterns \citep[]{Muthuswamy1999DeltaTheta, Douglass2002BSinNeo}. We plot model diagnostics for channels T4, T5, and T6, conditioned on large values of channel T3 (i.e., $\kappa = 0.9$) for one epileptic and one non-epileptic neonate (shown in Figure~\ref{DeltaFitT5}) but present the parameter estimates and significant results for all five neonates (see Figure~\ref{MDT-HM} and \ref{Kappa}). The rest of the model diagnostics are provided in the Supplementary Material, Section~\ref{appn3}.  

\subsubsection{Epileptic Neonates} \label{subsubsec:ictal}

We consider two epileptic neonates, that we number $1$ and $2$. The first considered neonate that falls within our limitation criteria was diagnosed with hypoplastic left heart syndrome. Neuroimaging data showed that there was an occurrence of right side intraventricular hemorrhage. Within a few minutes of the start of the monitoring, the patient suffered seizure localized in the right hemisphere of the brain. A sample of signals from this neonate is plotted in Figure~\ref{EEGsubj31}. 
The other considered epileptic neonate is around 43--44 weeks old and had a seizure in the occipital-parietal region of her brain. She was diagnosed with neonatal convulsions and shows right-side infarction in the brain. 
We applied the SpExCon model to the periodograms, as defined in Equation~\eqref{TVPerio2}, for Delta-band frequencies. Specifically, we analyze signals from channels T3, T4, T5, and T6 for the first neonate and channels O1, O2, P3, and P4 for the second neonate (as the occipital-parietal region is the seizure focal point for this neonate). Figure~\ref{DeltaFitT5} shows scatter plots of the transformed auto-periodograms of channels T4, T5 and T6 plotted against T3, alongside QQ-plots of simulated aggregates (i.e., based on the fitted model) 
versus observed aggregates ($R_b^{(-q)}$ in Equation~\eqref{aggR}), conditioned on large values of the $q$-th component (channel T3). We observe good fits to these data, as the quantiles of the aggregated simulated and aggregated observed tail values align closely along a straight diagonal line (see the Supplementary Material, Section~\ref{appn3}, for the complete illustration of the goodness-of-fit). Parameter estimates are given in Figure \ref{MDT-HM} and inference on the $\Delta$ parameter is displayed in Figure \ref{Kappa}. 

\begin{figure}
    \centering
    \includegraphics[width=1.02\textwidth]{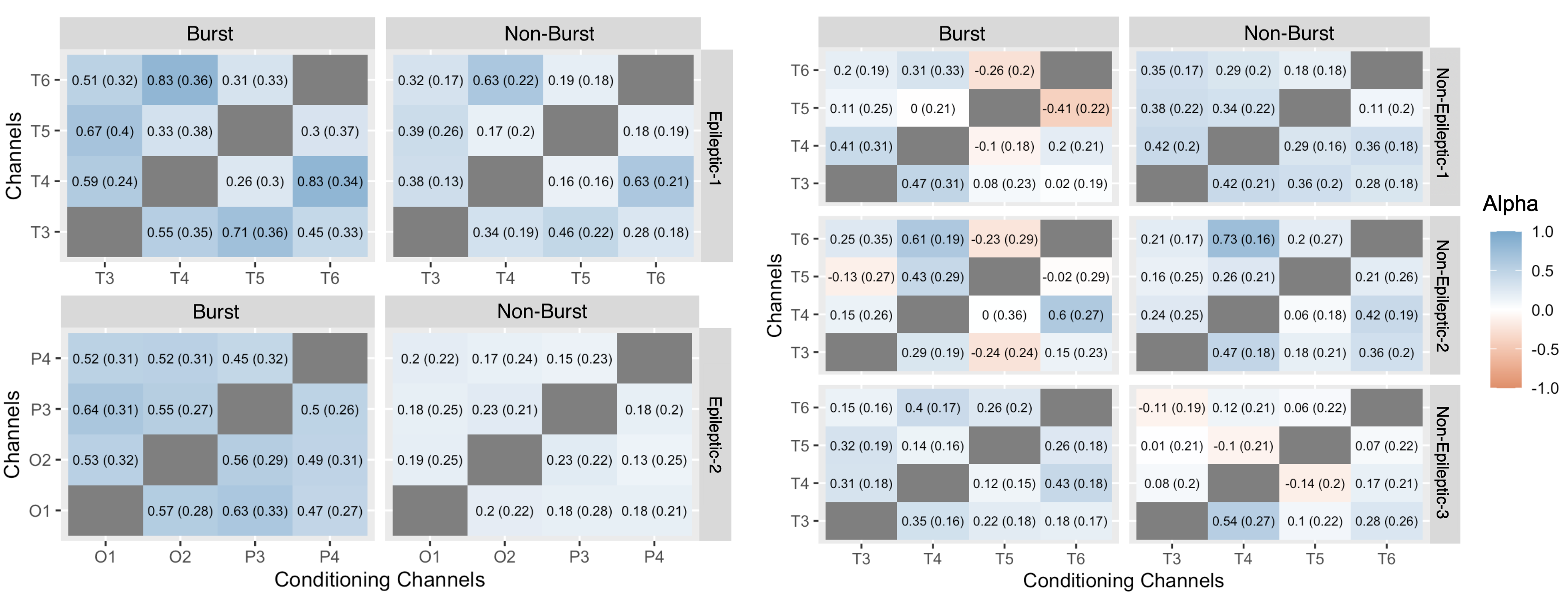}
    \caption{Spectral extremal connectivity estimates of $\alpha^{(d)}_{j|q}$'s (see Equation~\eqref{psi}) and its standard error, enclosed in parentheses, for Delta-band of (left block) epileptic patients (i.e., those who have had seizure) and (right block) non-epileptic patients (i.e., with normal brain imaging findings). The columns of each heatmap corresponds to the conditioning channel $q$, while the rows are the other channels. Results are reported for the two phases, namely non-burst ($d=0$) and burst ($d=1$); see the panel titles.}
    \label{MDT-HM}
    \includegraphics[width=0.99\textwidth]{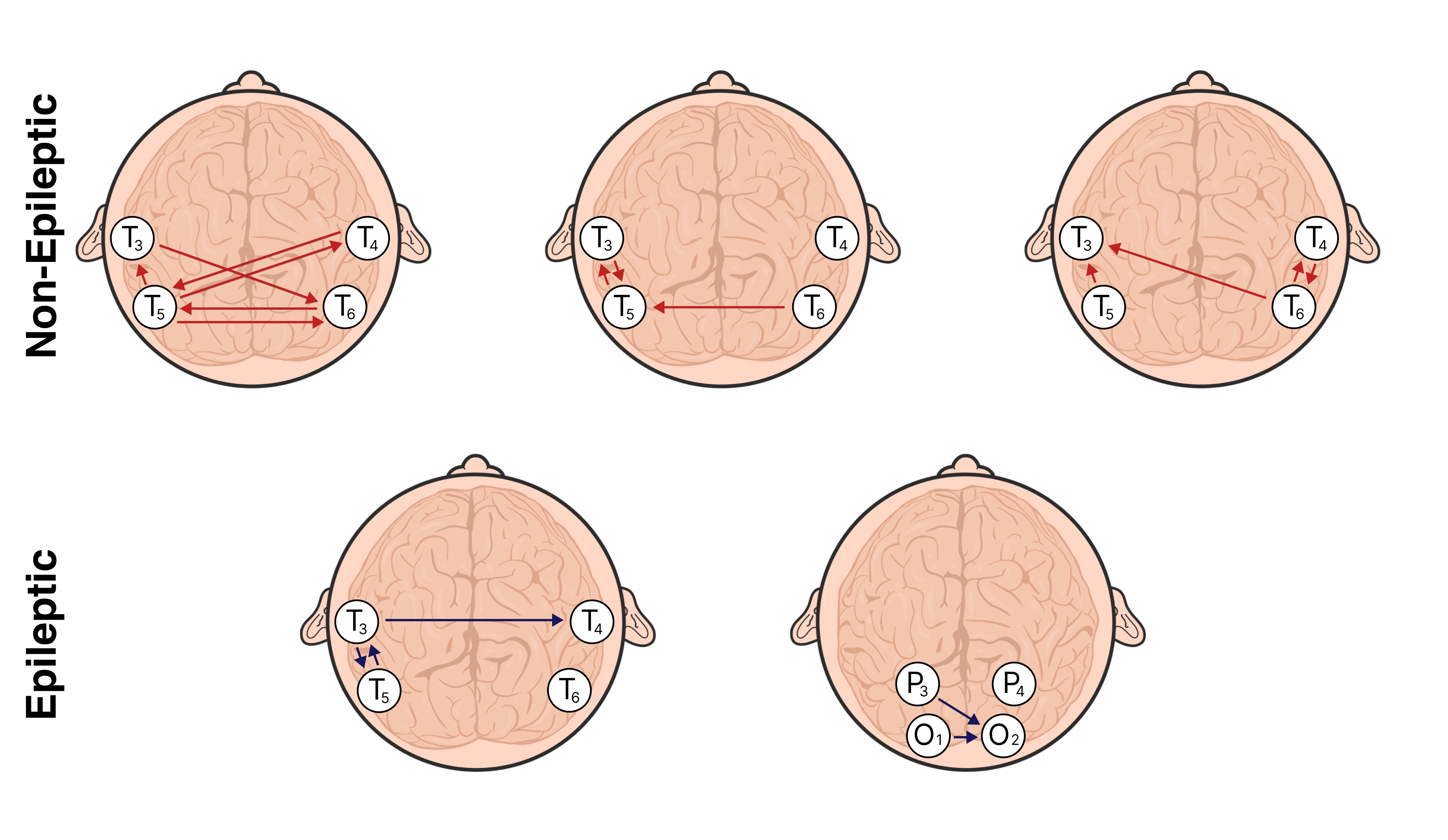}
    \caption{Significant estimates of $\Delta_{j|q}$, corresponding to the arrow $q \rightarrow j$, indicate the direction of significant changes in the connectivity between channels $q$ and $j$ (conditioned on high values at channel $q$).  
    Red arrows correspond to negative $\Delta_{j|q}$ (burst-phase has lower extremal connectivity) and blue corresponds to positive $\Delta_{j|q}$. The first row is for the three non-epileptic patients and the second row is for the two epileptic patients.}
    \label{Kappa}
\end{figure}

For both epileptic patients, we observe positive extremal associations across all channels in both burst and non-burst phases (see Figure~\ref{MDT-HM}). Furthermore, all significant changes in the epileptic group, as shown in Figure~\ref{Kappa}, are positive, suggesting increased extremal brain connectivity when in burst-phase.
Notably, the strongest extremal dependence in epileptic patient 1 occurred between channels T6 and T4. Additionally, the extremal association between T3 and T4 varied in magnitude between phases, with T4 showing a stronger positive association with T3 during the burst phase compared to the non-burst phase. A similar pattern is observed between T3 and T5, where high amplitudes in T3 were more likely to coincide with high amplitudes in T4 and T5 during the burst-suppression phase.
For epileptic patient 2, signals from channels P3 and O1 exhibit the strongest extremal dependence during the burst phase. However, this level of dependence was not observed in the non-burst phase of the same patient.
Significant values of positive $\Delta$ suggest that there is higher tail-connectivity during the burst-phase compared to the non-burst phase (see Figure \ref{Kappa}). 
Channel T3 (of epileptic patient 1) and P3 (of epileptic patient 2) may play a pivotal role in understanding the epileptic activity of the two neonates, as their high delta amplitudes could heighten activity in the right hemisphere where the seizure occurred (see Figure \ref{Kappa}). This finding aligns with the definition of \cite{NIH23}, that the occurrence of seizures stem from outbursts of brain energy of high oscillations.

\subsubsection{Non-epileptic Neonates} \label{subsubsec:nonictal}

We now analyze EEG data from a neonate (around 38--39 weeks old) NICU patient. 
The EEG showed no occurrence of seizure throughout the monitoring.
We also analyze two more non-epileptic neonates who were both around 42--43 weeks old during the time of recording. 

The results for these three non-epileptic neonates are summarized in Figures~\ref{MDT-HM} and \ref{Kappa}. Significant values of $\Delta_{j|q}$ estimates indicate lower tail dependence during the burst-phase, conversely to what we observed for epileptic neonates. 
This lowered tail connectivity in the burst-suppression phase of non-epileptic neonates might support a well-functioning inhibitory neurotransmitter theory.  
Our method could potentially be used in clinical trials to investigate negative extremal associations in healthy patients and to assess the impact of treatment on patients. Thus, our proposal could provide a theoretical framework to understand the desynchronization and neurotransmitter theory which are the basis of seizure management. Precisely, desynchronization theory suggests that treatments can stimulate an anti-epileptic effect by restoring balance to excessively ``synchronized'' neuronal activity, while neurotransmitter theory suggests that after treatment there is an increase in inhibitory neurotransmitters hindering transmission of neuronal communication \citep{ogbonnaya2013vagal}.




\section{Conclusion}\label{chap:conclusion}
One of most common and well-established brain connectivity measures is coherence, which reveals spectral association in the bulk of the distribution. 
While coherence is informative in terms of the linear association between two signals at a specific frequency, it does not take into account the association that is only highlighted in the tail of the joint distribution. In studying seizures, clinical and quantitative studies have shown linkage of seizures to excessive increase in electrical activity from multiple neurons \citep{NIH23}. Motivated by these findings, we developed a novel statistical model, SpExCon, to investigate spectral association between channels that focuses on extremely high amplitude (or spectral power) at a specific frequency band. The proposed SpExCon method aims to identify pivotal brain-connectivity related to burst-suppression patterns and seizures. 

SpExCon provides a novel framework for detecting and quantifying changes in extremal dependence across brain regions. Our method's ability to identify frequency-specific tail dependencies offers several key advantages for neuroscience applications. Notably, it enables the detection of extreme synchronization patterns that may serve as precursors to seizure events, as demonstrated by the distinct tail dependence profiles observed between epileptic and non-epileptic events. This capacity for differentiation could potentially assist in seizure prediction and intervention strategies.

Our work uses a spectral conditional tail dependence model, motivated by the conditional modeling framework of \cite{Heffernan2004}, with tail dependence parameters changing with the state of the process. Simulation results have shown accurate estimation of the parameters and the direction of the association. One advantage of our method is that it provides estimates of spectral tail dependence measure at specific frequency bands, which highlights differences in brain connectivity for epileptic and non-epileptic neonates. 
Our method detects differences in tail-association of EEG signals during the burst-suppression and regular phases of non-epileptic and epileptic neonates.
One of our novel findings is a higher frequency-specific tail dependence observed during the burst-suppression of epileptic patients whereas non-epileptic patients show lower frequency-specific tail dependence during the burst-suppression phases.
These novel findings facilitate a deeper understanding of functional brain connectivity during extreme events, e.g., seizures. Our model holds versatile applications, such as environmental, where different rainfall types may have varying extremal dependence structures \citep[see, e.g.,][]{richards2023joint}.


SpExCon's conditioning approach provides methodological advantages over existing extremal dependence models for brain connectivity (Guerrero et al., 2023; Redondo et al., 2024) through its incorporation of information about reference states in its vector parameters. By explicitly modeling the transition between baseline and extreme states, our method can detect subtle changes in dependence structures that might be masked by traditional coherence analyses. This state-dependent modeling approach is particularly valuable in clinical settings where patient-specific baseline variations must be considered.

While our primary application focused on seizure analysis in the 0-50 Hz interval, SpExCon's framework is generalizable to other applications requiring the analysis of extremal dependencies in multivariate time series as it could be used to study extreme synchronization events in other pathological conditions or during cognitive tasks. Furthermore, the method could be extended to analyze synchronization patterns in higher frequency bands (80-500 Hz) in intracranial EEG, which have been shown to be potential predictors of post-surgical outcomes in epilepsy patients \citep{PintoOrellana2024}. This extension to higher frequencies could provide new insights into the relationship between extreme spectral dependencies and treatment outcomes.

\begingroup

\if1\blind
{


} \fi

\endgroup

\baselineskip=14pt
\begingroup
\bibliographystyle{apalike}
\bibliography{bibliography}

\newpage
\setstretch{1.5}


\pagenumbering{arabic}
\renewcommand*{\thepage}{\arabic{page}}
\counterwithin{figure}{section}
\renewcommand{\thesection}{A-\arabic{section}}
\setcounter{section}{0}

\bigskip
\begin{center}
{\large\bf SUPPLEMENTARY MATERIAL}
\end{center}

This supplementary document provides additional details to support the methodology and results presented in the main manuscript. Section~\ref{appn2} describes the spectral clustering algorithm used for burst-suppression identification adopted from \cite{Narula2021}. Section~\ref{appn1} presents the proof of Proposition 4.1., detailing the mathematical derivations and justifications. Finally, Section~\ref{appn3} includes the full set of QQ plots used to assess the goodness-of-fit of the SpExCon model on the data described in Section~\ref{chap:analysis}.

\section{Spectral Clustering for Burst-Suppression Identification} \label{appn2}

The second step towards our goal of measuring dependence between the EEG channels through their variance (energy) contributions is to label (or cluster) the different time blocks $b$ as either burst or non-burst.
This will consequently lead to identifying differences in the connectivity structure between the burst and non-burst phases. 
A new unsupervised machine learning technique through spectral clustering was proposed in \cite{Narula2021}, which we adopt here. In collaboration with neurologists, \cite{Narula2021} label the burst-suppression pattern in the data, which served as the reference, and compared it with the results of their spectral clustering method. Spectral clustering demonstrated competitive performance against deep learning techniques and exhibited high accuracy.

To simplify the notation, we will use ${\bm X}^{b}(t) := \{{\bm X}(t)\}_{t \in \mathcal{T}_b}$ to be the filtered EEG for block $b$. 
Let {$\bs{C}_b$} be the covariance matrix of the filtered signals ${\bm X}^{b}(t)$ at block $b$ that we restrict to always be positive semi-definite to ensure that eigenvalues are positive.
Let $\{\lambda_{j,b,b'}\}_{j = 1}^p$ be the ordered solutions to the equation ${\bs C}_b{\bs y} = \lambda{\bs C}_{b'}{\bs y}$ for blocks $b \neq b'$, and $\{\bs{y}_j\}_{j = 1}^p$ the corresponding eigenvectors. Intuitively, $\lambda_{j,b,b'}$ carries the information on the ``ratio" between ${\bs C}_b$ and ${\bs C}_{b'}$. Hence, all values of $\{\log(\lambda_{j,b,b'})\}_{j = 1}^p$ provide together a suitable measure of distance between the matrices ${\bs C}_b$ and ${\bs C}_{b'}$. Here, dissimilarity in the underlying processes that generated the EEG signals in blocks $b$ and $b'$ will be characterized by $g(b,b') = ||\{\log (\lambda_{j,b,b'})\}_{j = 1}^p||_2$ \citep{Forstner2003}.
Note that when the covariance matrices for blocks $b$ and $b'$ are identical, then $\lambda_{j,b,b'} = 1$ for all $j$ and, hence, $g(b,b') = 0$. 
Now, we set up the similarity matrix between all blocks ($b=1, \ldots, B$) to be $\bs{S}$ (which is $B \times B$) whose elements $S_{b,b'}$ are given by
\[ S_{b,b'} = \exp\left\{ - \frac{g(b,b')^2}{2\eta}\right\}, \]
where $\eta$ is the median of $\big\{ g(b,b') \big\}$ across all pairs $(b,b')$.


We then use the eigenvectors, $\{\bs{y}_j\}_{j = 1}^p$ where $\bs{y}_j = (y_{j,1}, \dots, y_{j,B})^\top$, of the similarity matrix, $\bs{S}$, to cluster the blocks of multivariate time series into two clusters. This is obtained by solving another generalized eigenvalue system given by $(\bs{V-S})\bs{y} = \bs{\lambda Vy}$ where $\bs{V}$ is the degree matrix of $\bs{S}$ \citep{Narula2021} and then conducting $k$-means clustering, $k = 2$, of the first two eigenvectors which correspond to the first two largest eigenvalues. Details of spectral clustering can be found in \cite{Ng2001Sclust}. The output in this step is the binary labels $D_b$ where $D_b = 1$ if the EEG in block $b$ belongs to the burst phase and $0$ if it is a non-burst phase. To distinguish the clusters as burst or non-burst, we calculate the average total variance of the clusters. The cluster of blocks that has a lower probability of occurrence and has a higher variance (on the average) is the cluster of blocks that we label as burst. 
We now denote the set of blocks that belong to the non-burst and burst phases to be, respectively,  $\mathcal{B}_0$ and $\mathcal{B}_1$, i.e., $\mathcal{B}_0 = \{b: D_b = 0\}$ and 
$\mathcal{B}_1 = \{b: D_b = 1\}$. 

\paragraph{Accuracy and homogeneity of spectral clustering.} To check the accuracy of the spectral clustering approach, we use a 70\%--30\% separation of training and test data. The first 70\% of the $B$ blocks are used as training data ($b^* = 1, \dots, \lfloor 0.7B \rfloor$) and the remaining used for testing ($b^{\Sc} = 1, \dots, \lfloor 0.3B \rfloor$). Spectral clustering is performed on the training set, and the similarity distance ($S_{b^\Sc b^*}$) of training blocks with respect to the test blocks is computed. For each block $b^\Sc$ in the test set, we then take the five nearest neighbors in the training set, i.e., the blocks $b^*$ with the five highest values of $S_{b^{\Sc}b^*}$. The label that is the most frequent among the nearest neighbors is used as the predicted label for block $b^\Sc$. In the simulation study detailed in Section~\ref{chap:simulation}, we report the average sensitivity (or true positive rate) and specificity (true negative rate) in the test set of this method. 


\section{Proof of Proposition~\ref{Prop:Simulation}} \label{appn1}

Here we detail the proof of Proposition 4.1.
\begin{duplicate} 
    Consider a `locally stationary' $p$-dimensional time series $\{\bs{X}(t)\}_{t=1}^T$ so that the time series is approximately stationary for distinct blocks of length $L$, indexed by time points $\mathcal{T}_b := \{(b-1)L+1 , \dots , bL\}, b = 1, \dots, B$, such that $T = LB$. Specifically, let ${\bs{X}(t) = \bs{C}(t)\bs{O}(t) + \bs{W}(t),}$ for $t \in \{\mathcal{T}_b\}_{b = 1}^B$, where $\bs{C}(t) \in \mathbb{R}^{p \times 5}$ is assumed to be constant within a time block $\mathcal{T}_b$, i.e., $\bs{C}((b-1)L+1) = \dots = \bs{C}(bL)$, for $b \in \mathcal{B}_d$, and $d = 0, 1$, such that $\mathcal{B}_0 \cap \mathcal{B}_1 = \emptyset$ and $\mathcal{B}_0 \cup \mathcal{B}_1 = \{1, \dots, B\}$. Moreover, assume the vector $\bs{O}(t) = [O_1(t), \ldots, O_5(t)]^\top$ is comprised of mutually-independent AR(2) processes, and $\bs{W}(t) \in \mathbb{R}^{p}$ is an independent vector of white-noise processes.
    Define the vector $\tilde{\bs{C}}_{\ell}(b) := (\tilde{C}_{1\ell}(b), \dots, \tilde{C}_{p\ell}(b))^\top$, for a fixed $\ell \in \{1, \dots, 5\}$ and $b \in \mathcal{B}_d$, such that 
    $\tilde{C}_{j\ell}(b) := \{(C_{j\ell}(t))^2: t \in \mathcal{T}_b\} $, for $j = 1, \dots, p$. 
    Assume that $\{\tilde{\bs{C}}_{\ell}(b)\}_{b \in \mathcal{B}_d}$, for $d = 0, 1$, has a Gaussian copula with exponential margins, and let $\rho^{(\ell,d)}_{qj} > 0$ denote the correlation between $\{\tilde{C}_{q\ell}(b)\}_{b \in \mathcal{B}_d}$ and $\{\tilde{C}_{j\ell}(b)\}_{b \in \mathcal{B}_d}$. Then, for a \underline{fixed $\ell$}, the SpExCon tail dependence parameters are $\alpha^{(d)}_{j|q} = (\rho^{(\ell,d)}_{qj})^2$, $a_{j|q} = \tanh^{-1}\{(\rho^{(\ell,0)}_{qj})^2\}$, $\beta_{j|q} = 1/2$, and $\Delta_{j|q} = \tanh^{-1}\{(\rho^{(\ell,1)}_{qj})^2\} - \tanh^{-1}\{(\rho^{(\ell,0)}_{qj})^2\}$.
\end{duplicate}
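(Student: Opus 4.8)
The plan is to reduce the statement about the periodogram vector $\bs{I}^{(\ell)}(b)$ to a statement about the squared-amplitude vector $\tilde{\bs{C}}_{\ell}(b)$, and then to invoke the standard conditional-extremes behaviour of a Gaussian copula. First I would establish that, within block $b$, the band-$\Omega_\ell$ periodogram isolates the $\ell$-th oscillation. Writing $X_j(t) = \sum_{m=1}^5 C_{jm}(b) O_m(t) + W_j(t)$ for $t \in \mathcal{T}_b$, the discrete Fourier transform at a Fourier frequency $\omega_r \in Q_\ell$ is $a_{j,b}(\omega_r) = \sum_m C_{jm}(b)\hat{O}_m(\omega_r) + \hat{W}_j(\omega_r)$. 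Because each $O_m$ is an AR(2) process with spectral mass concentrated near $\omega_m \in \Omega_m/s$, the terms with $m \neq \ell$ and the flat-spectrum white-noise term contribute negligibly on $Q_\ell$ relative to $C_{j\ell}(b)\hat{O}_\ell(\omega_r)$. Hence $\tilde{I}_{j}(b,\omega_r) \approx (C_{j\ell}(b))^2|\hat{O}_\ell(\omega_r)|^2$, and after band-averaging,
\[
I_j^{(\ell)}(b) \approx \tilde{C}_{j\ell}(b)\, P_\ell(b), \qquad P_\ell(b) := \frac{1}{|Q_\ell|}\sum_{\omega_r \in Q_\ell}|\hat{O}_\ell(\omega_r)|^2,
\]
where the common factor $P_\ell(b)$ is shared across all channels $j$ and is independent of $\tilde{\bs{C}}_{\ell}(b)$. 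I would then argue that, because $P_\ell(b)$ is a smoothed, band-averaged periodogram of a single stationary process, it concentrates about its mean as the number of averaged ordinates grows, so that asymptotically $I_j^{(\ell)}(b) = \sigma_\ell^2\, \tilde{C}_{j\ell}(b)(1+o_p(1))$ with $\sigma_\ell^2$ deterministic. A common deterministic rescaling is a componentwise monotone transformation, so $\bs{I}^{(\ell)}(b)$ and $\tilde{\bs{C}}_{\ell}(b)$ share the same copula --- namely, within phase $d$, the Gaussian copula with pairwise correlations $\rho_{qj}^{(\ell,d)}$.

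With the copula identified, the conditional-extremes parameters $\alpha_{j|q}^{(d)}$ and $\beta_{j|q}$ depend only on this copula (the margins being standardized to Laplace), so it remains to record the \cite{Heffernan2004} normalization for a bivariate Gaussian copula. Writing $Z_q, Z_j$ for the underlying standard-normal scores with correlation $\rho := \rho_{qj}^{(\ell,d)}$, one has $Z_j \mid (Z_q = z) \sim N(\rho z, 1-\rho^2)$, while the Laplace margin $Y$ satisfies $Y \approx Z^2/2$ in the upper tail, i.e.\ $Z \approx (2Y)^{1/2}$. Substituting $z = (2y)^{1/2}$ and transforming $Z_j$ back to its Laplace scale gives
\[
Y_j \mid (Y_q = y) \;\approx\; \rho^2\, y \;+\; \rho\,(1-\rho^2)^{1/2}\,(2y)^{1/2}\,N \;+\; O_p(1),
\]
with $N$ standard normal. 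Matching this against the model $\alpha_{j|q}^{(d)} y + y^{\beta_{j|q}} Z_{j|q}^{(d)}$ in Equation~\eqref{DEC} identifies the leading term as $\alpha_{j|q}^{(d)} = \rho^2 = (\rho_{qj}^{(\ell,d)})^2$ and the fluctuation order as $\beta_{j|q} = 1/2$, with Gaussian residual. Since in a jointly Gaussian vector the law of $Z_j \mid Z_q$ depends only on the pairwise correlation $\rho_{qj}^{(\ell,d)}$, this argument applies componentwise for every $j \neq q$.

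Finally, translating to the $\tanh$-reparameterization of Equation~\eqref{psi} is immediate: from $\alpha_{j|q}^{(d)} = \tanh(a_{j|q} + \Delta_{j|q} d)$, setting $d=0$ gives $a_{j|q} = \tanh^{-1}\{(\rho_{qj}^{(\ell,0)})^2\}$, and setting $d=1$ and subtracting yields $\Delta_{j|q} = \tanh^{-1}\{(\rho_{qj}^{(\ell,1)})^2\} - \tanh^{-1}\{(\rho_{qj}^{(\ell,0)})^2\}$. I expect the main obstacle to lie in the first step: rigorously controlling the periodogram so that (i) the cross-band and white-noise contributions on $Q_\ell$ are asymptotically negligible, and (ii) the common energy factor $P_\ell(b)$ concentrates tightly enough to be absorbed into a deterministic constant without perturbing the tail-dependence parameters. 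The second and third steps are essentially the standard Gaussian-copula conditional-extremes calculation and a one-line inversion, respectively.
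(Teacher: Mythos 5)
Your proposal is correct and follows essentially the same route as the paper's own proof: both use linearity of the block DFT to isolate the band-$\ell$ oscillation, reduce the copula of the band-averaged periodogram vector to that of $\tilde{\bs{C}}_{\ell}(b)$ (the paper by writing $I_j(b,\Omega_\ell) = (C_{j\ell}(b))^2 I_{O}(\Omega_\ell) + I_W(\Omega_\ell)$ and treating the channel-free factors as block-constant, you by dropping the noise as negligible and arguing concentration of the common factor $P_\ell(b)$), and then apply the Gaussian-copula conditional-extremes normalization $\alpha = \rho^2$, $\beta = 1/2$ followed by the $\tanh$ inversion. The only real difference is one of rigor rather than route: you derive the Gaussian-copula parameters explicitly and honestly flag the band-leakage and common-energy-factor approximations as the weak points, where the paper asserts these steps as exact and simply cites Heffernan and Tawn (2004).
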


\begin{proof}
    Say, for channel $j$ and block $b$, the periodogram (see Section 3.1. Equation (6)) of $\{X_j(t)\}_{t \in \mathcal{T}_b}$ is
\begin{equation*} 
    I_j(b,\omega_r) = \frac{1}{L} \bigg| \sum_{t\in \mathcal{T}_b} \left(C_{j1}(b)O_1(t) + \dots + C_{j5}(b)O_5(t) + W_j(t) \right) \exp(-i2\pi \omega_r (t - [(b-1)L + 1])) \bigg|^2.
\end{equation*}
For a specific frequency band $\Omega_\ell$, the mutual independence of $\bs{O}(t)$ ensures zero frequency outside the band of interest, hence,
\begin{align*}
    I_j(b,\omega_r) &=  \frac{1}{L} \bigg| \sum_{t\in \mathcal{T}_b} \left(C_{j\ell}(b)O_{\ell}(t) + W_j(t) \right) \exp(-i2\pi \omega_r (t - [(b-1)L + 1])) \bigg| ^2  \\
    &= \frac{1}{L} \bigg| \sum_{t\in \mathcal{T}_b} 
    C_{j\ell}(b)O_{\ell}(t) \exp(-i2\pi \omega_r (t - [(b-1)L + 1])) \\ &+
    \sum_{t\in \mathcal{T}_b} 
    W_j(t) \exp(-i2\pi \omega_r (t - [(b-1)L + 1])) \bigg|^2. \notag
\end{align*}
   
Note: $\text{Cov}(O_{\ell}(t), W_{\ell}(t)) = \sum_{t} O_{\ell}(t)W_{\ell}(t) = 0$. Define the periodograms of $O_\ell(t)$, for $\ell = 1,\dots,5$, and $W_j(t)$, for $j = 1,\dots,p$, respectively, as
\begin{align*}
    I^{O}_{\ell}(\omega_r) &= \frac{1}{L} \bigg|\sum_{t \in \mathcal{T}_b} 
    O_{\ell}(t) \exp(-i2\pi \omega_r (t - [(b-1)L + 1])) \bigg|^2, \text{ and } \\
     I^W_{j}(\omega_r) &= \frac{1}{L} \bigg|\sum_{t \in \mathcal{T}_b} 
    W_{j}(t) \exp(-i2\pi \omega_r (t - [(b-1)L + 1])) \bigg|^2.
\end{align*}
Hence,

\begin{align*}
    I_j(b,\omega_r)  &= \frac{1}{L} 
    \Bigg[ \bigg| \sum_{t\in \mathcal{T}_b} 
    C_{j\ell}(b)O_{\ell}(t)\exp(-i2\pi \omega_r (t - [(b-1)L + 1])) \bigg| ^2 \\ &+
    \bigg| \sum_{t\in \mathcal{T}_b} 
    W_j(t)\exp(-i2\pi \omega_r (t - [(b-1)L + 1])) \bigg| ^2 \Bigg] \notag \\
    &= (C_{j\ell}(b))^2 I^{O}_{\ell}(\omega_r) + I^W_{j}(\omega_r).
\end{align*}
Recall that ${Q_{\ell} = \{\omega_r: \ s\omega_r \in [-\Omega_{\ell}^{(2)}, -\Omega_{\ell}^{(1)}) \ \bigcup \ (\Omega_{\ell}^{(1)}, \Omega_{\ell}^{(2)}] \}}$ is the set of fundamental frequencies corresponding to band $\Omega_\ell$. Moreover, let
\[ {I_{O}(\Omega_\ell) = \frac{1}{|Q_\ell|} \sum_{\omega_r \in Q_\ell} I^{O}_{\ell}(\omega_r)} \; \; \text{ and } \; \;  {I^{W}_j(\Omega_\ell) = \frac{1}{|Q_\ell|} \sum_{\omega_r \in Q_\ell} I^W_{j}(\omega_r)}.
\]
Then,
\begin{align*}
    I_j(b,\Omega_\ell) &= \frac{1}{|Q_\ell|} \sum_{\omega_r \in Q_\ell} \left((C_{j\ell}(b))^2 I^{O}_{\ell}(\omega_r) + I^W_{j}(\omega_r) \right) = (C_{j\ell}(b))^2 I_{O}(\Omega_\ell) + I_W(\Omega_\ell). 
\end{align*}

\noindent Since $I_{O}(\Omega_\ell)$ and $I^{W}_j(\Omega_\ell)$ are constant over the block $b$, the tail-dependence of $\bs{I}(b,\Omega_\ell)$ is determined by the tail-dependence of $\tilde{\bs{C}}_{\ell}(b)=(C_{1\ell}^2(b),\dots, C^2_{p\ell}(b))$; thus, if $\tilde{\bs{C}}_{\ell}(b)$ follows a Gaussian copula, results from \cite{Heffernan2004} imply that the SpExCon dependence parameters of the multivariate periodogram, $\bs{I}(b,\Omega_1),$ are $\alpha^{(d)}_{j|q} = (\rho^{(\ell,d)}_{qj})^2$ and ${\beta_{j|q} = 1/2}$ for $j,q=1,\dots,p, j\neq q$. Consequently, we have $\alpha^{(0)}_{j|q} = (\rho^{(\ell,0)}_{qj})^2 = \tanh(a_{j|q})$ and $\Delta_{j|q} = \tanh^{-1}\{(\rho^{(\ell,1)}_{qj})^2\} - \tanh^{-1}\{(\rho^{(\ell,0)}_{qj})^2\}$ for $j,q=1,\dots,p, j\neq q$.

\end{proof}

\section{Goodness-of-fit} \label{appn3}
    The following graphs show the goodness-of-fit diagnostics for SpExCon estimates. Panels (A) and (B) display scatterplots of simulated (brown squares) and observed (blue circles) values (both transformed to Laplace margins) of the block-wise auto-periodograms for the Delta-band of the specified neonate. Solid lines in (A) and (B) separate the upper 10\% and lower 90\% of the observed data. The QQ-plots in (C) and (D) compare the aggregate sum ($R^{(-q)}_b$) of the observed and simulated auto-periodograms, conditioned on channel $q$ being large.

\begin{figure}[h!]
    \centering
    \includegraphics[width=1.01\textwidth]{Figures/Subj31-T3-B.png}
    \includegraphics[width=1.01\textwidth]{Figures/Subj31-T3-NB.png}
    \caption{SpExCon estimates diagnostics for \textit{epileptic neonate 1}, conditioned on large values of channel T3.}

\end{figure}

\begin{figure}
    \centering
    \includegraphics[width=1.01\textwidth]{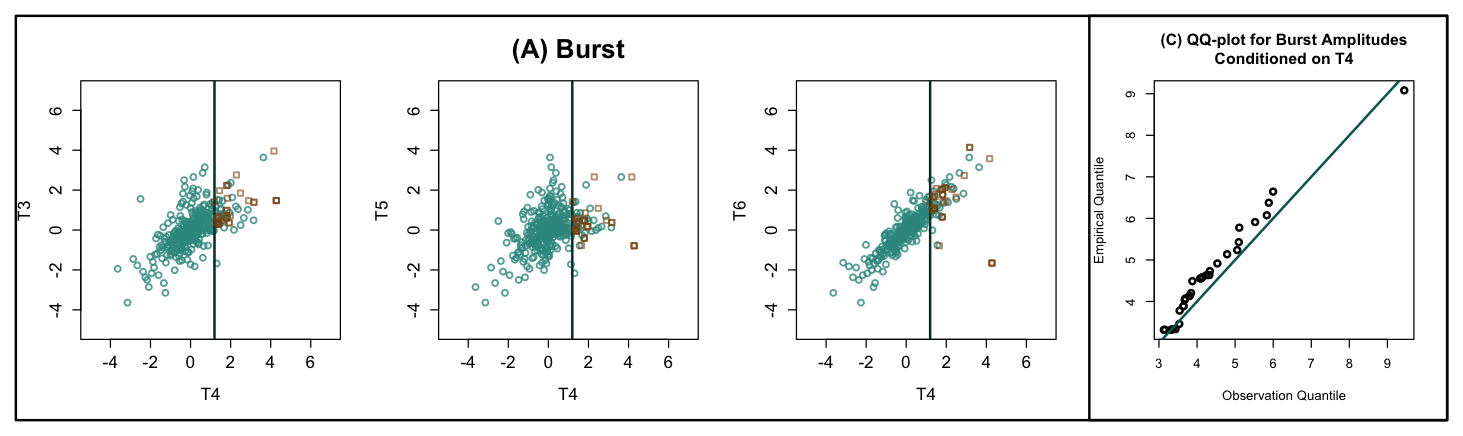}
    \includegraphics[width=1.01\textwidth]{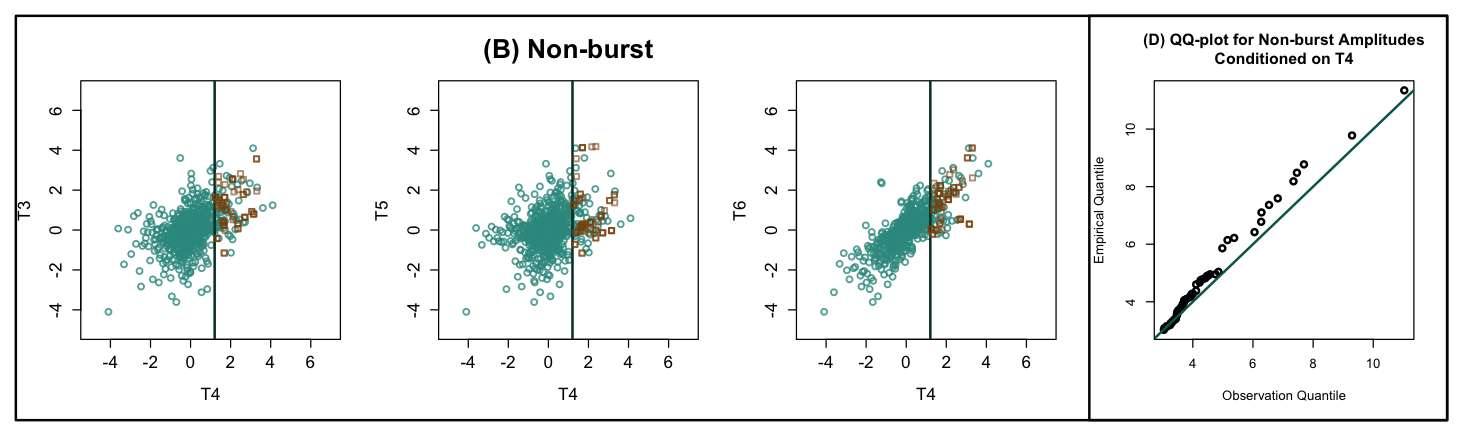}
    \caption{SpExCon estimates diagnostics for \textit{epileptic neonate 1}, conditioned on large values of channel T4.}

    \includegraphics[width=1.01\textwidth]{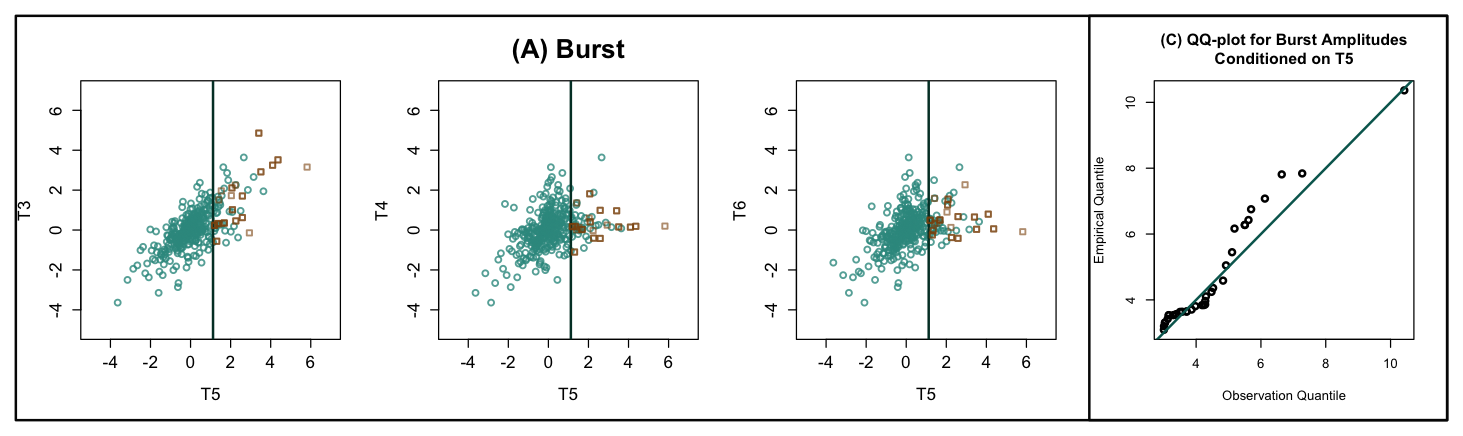}
    \includegraphics[width=1.01\textwidth]{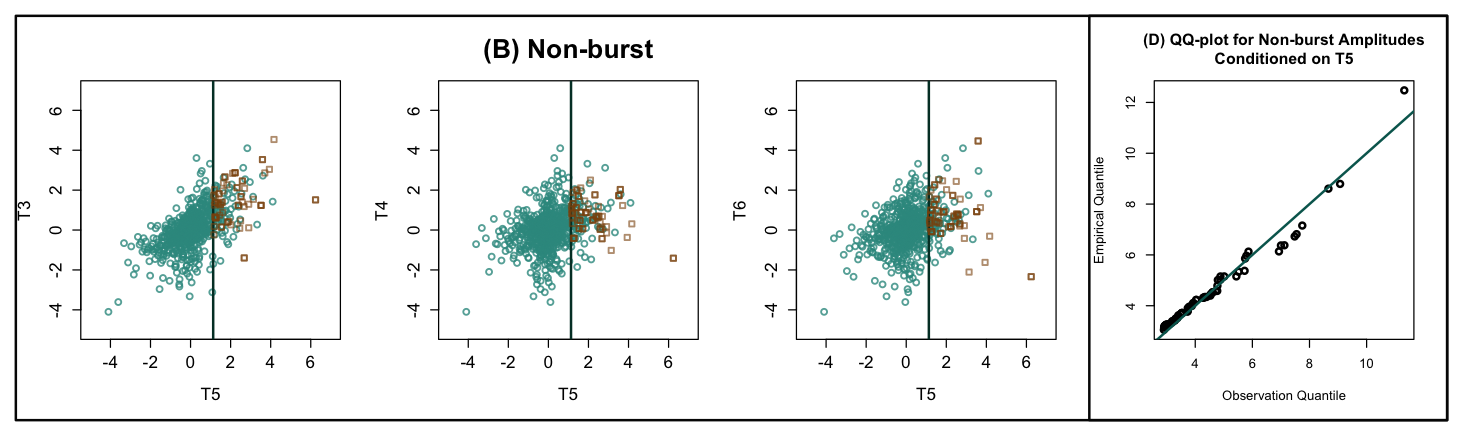}
    \caption{SpExCon estimates diagnostics for \textit{epileptic neonate 1}, conditioned on large values of channel T5.}

\end{figure}

\begin{figure}
    \centering
    \includegraphics[width=1.01\textwidth]{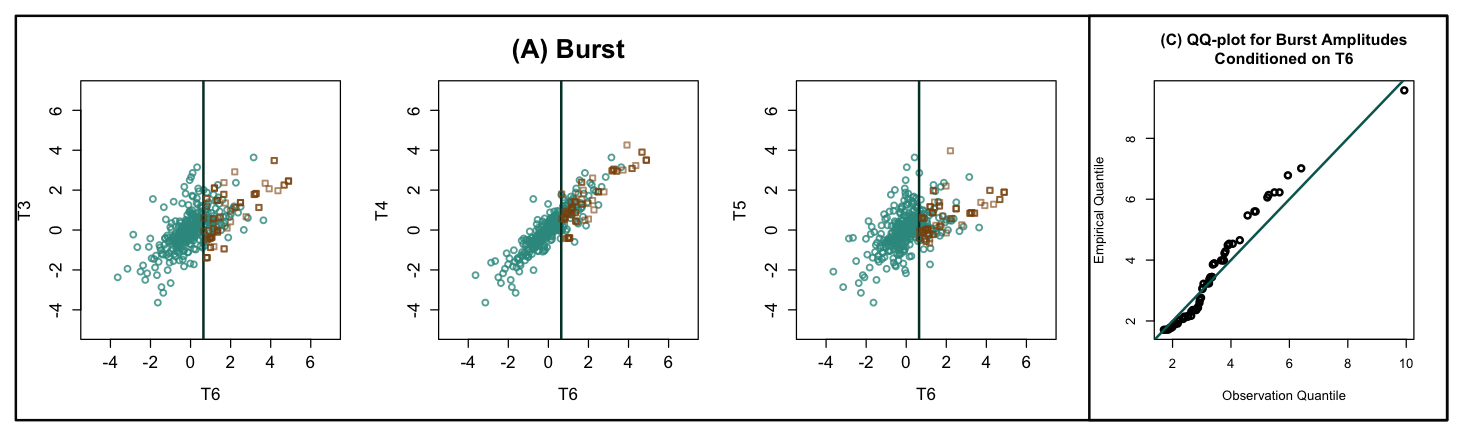}
    \includegraphics[width=1.01\textwidth]{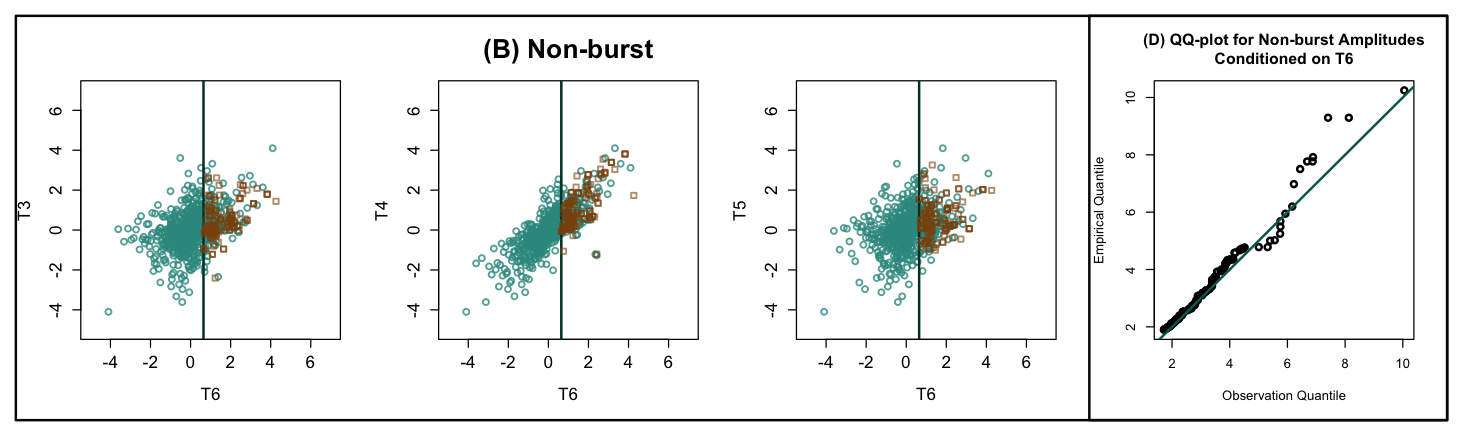}
    \caption{SpExCon estimates diagnostics for \textit{epileptic neonate 1}, conditioned on large values of channel T6.}
    \centering
    \includegraphics[width=1.01\textwidth]{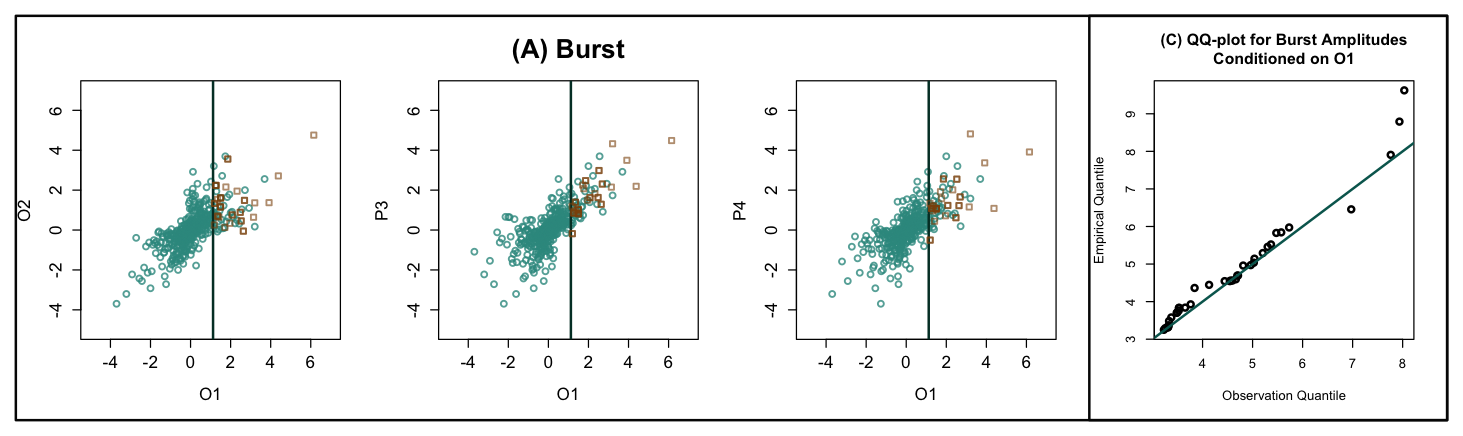}    \includegraphics[width=1.01\textwidth]{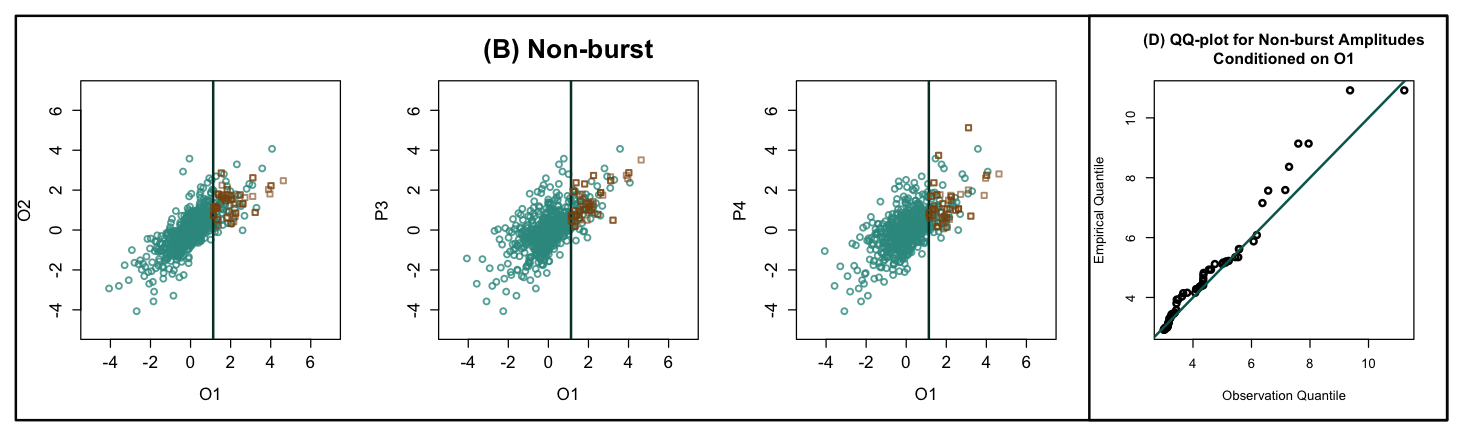}
    \caption{SpExCon estimates diagnostics for \textit{epileptic neonate 2}, conditioned on large values of channel O1.}
\end{figure}

\begin{figure}
    \centering
    \includegraphics[width=1.01\textwidth]{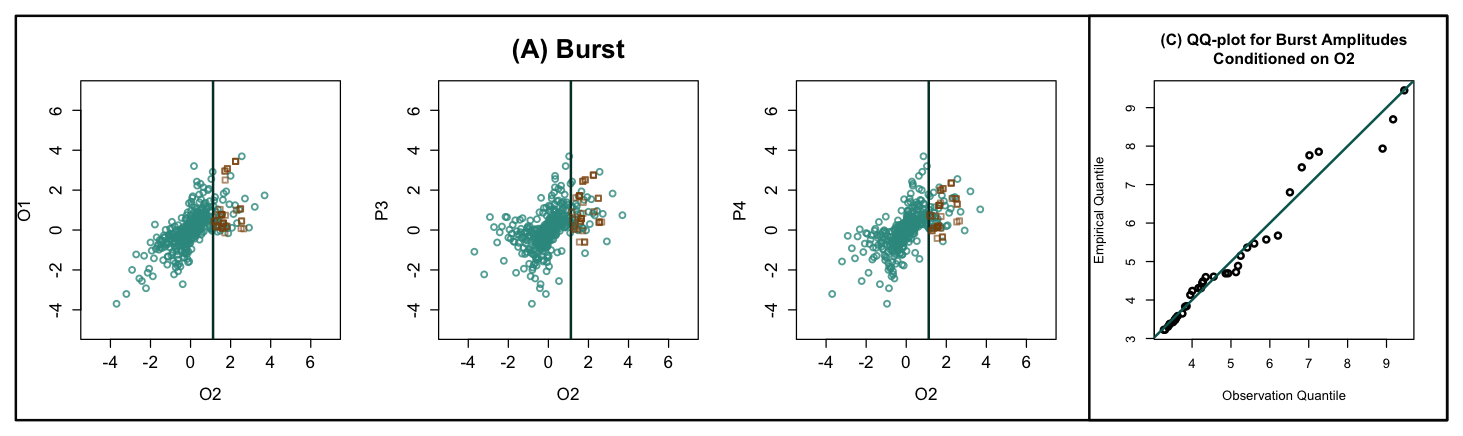}
    \includegraphics[width=1.01\textwidth]{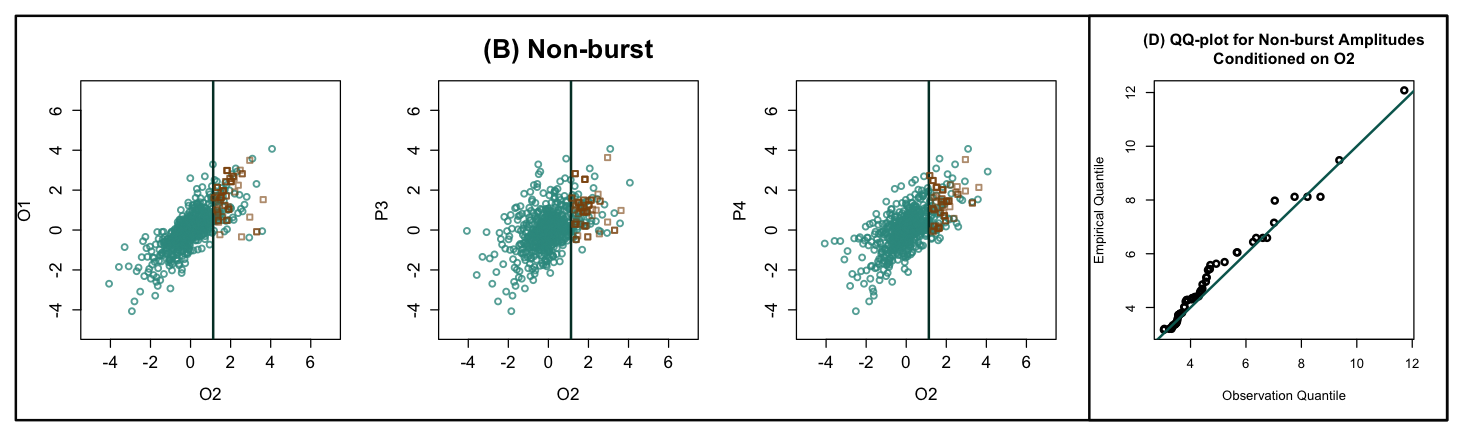}
    \caption{SpExCon estimates diagnostics for \textit{epileptic neonate 2}, conditioned on large values of channel O2.}
    \centering
    \includegraphics[width=1.01\textwidth]{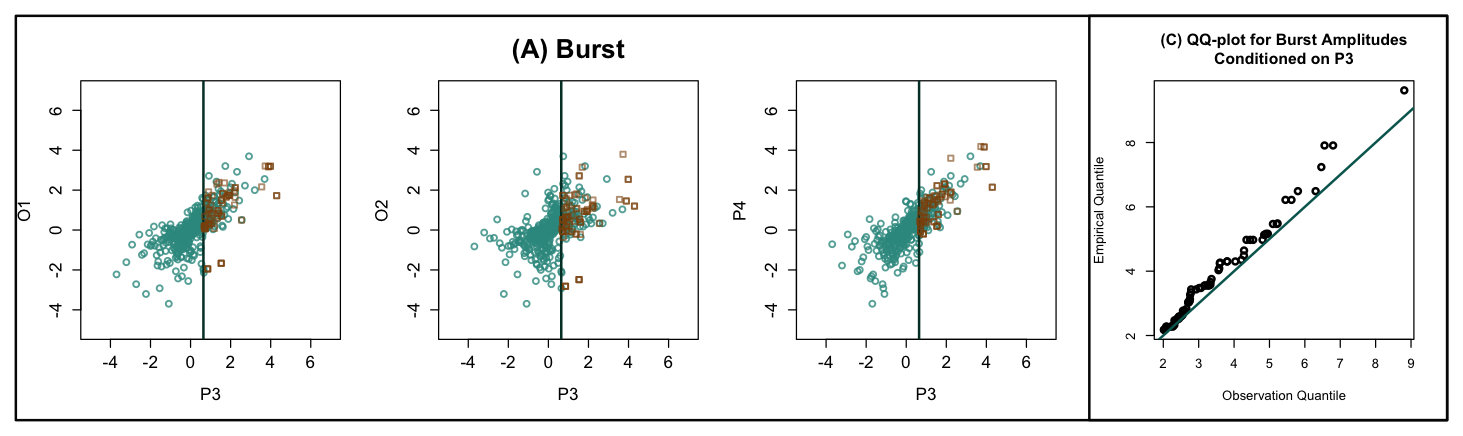}
    \includegraphics[width=1.01\textwidth]{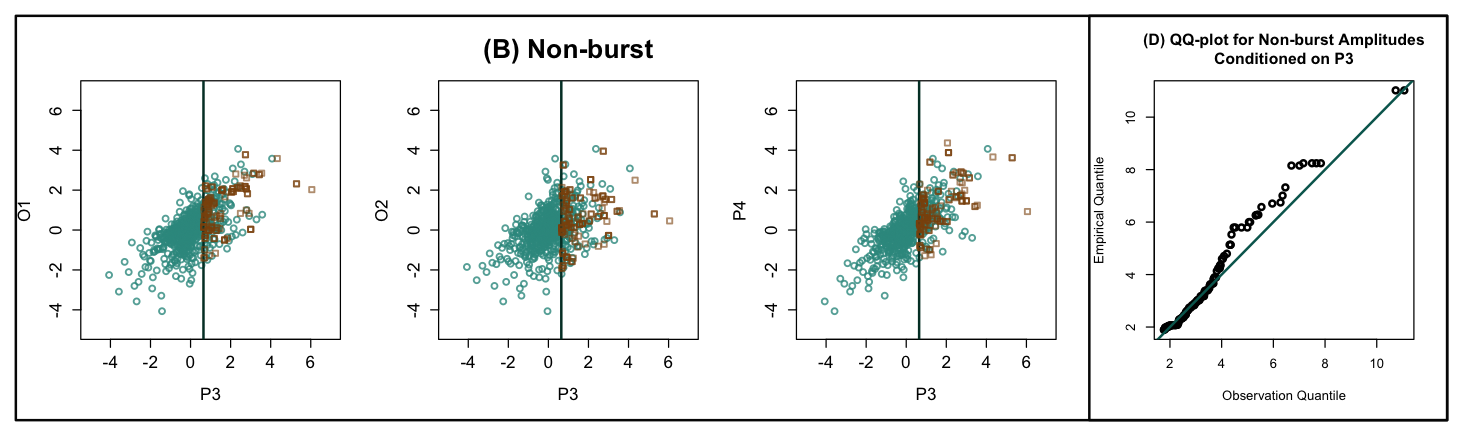}
    \caption{SpExCon estimates diagnostics for \textit{epileptic neonate 2}, conditioned on large values of channel P3.}
\end{figure}

\begin{figure}
    \centering
    \includegraphics[width=1.01\textwidth]{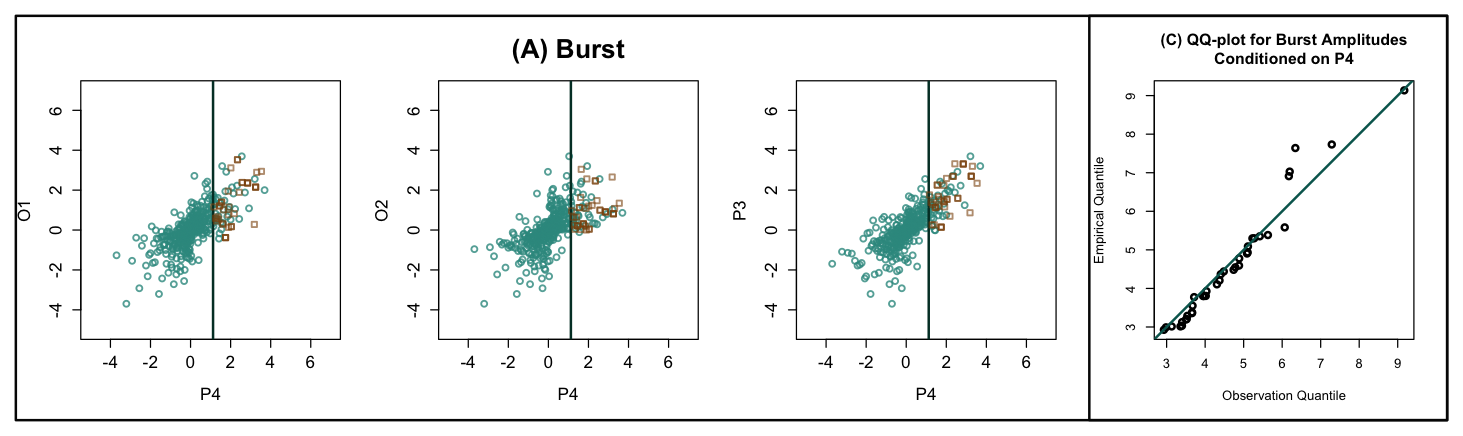}
    \includegraphics[width=1.01\textwidth]{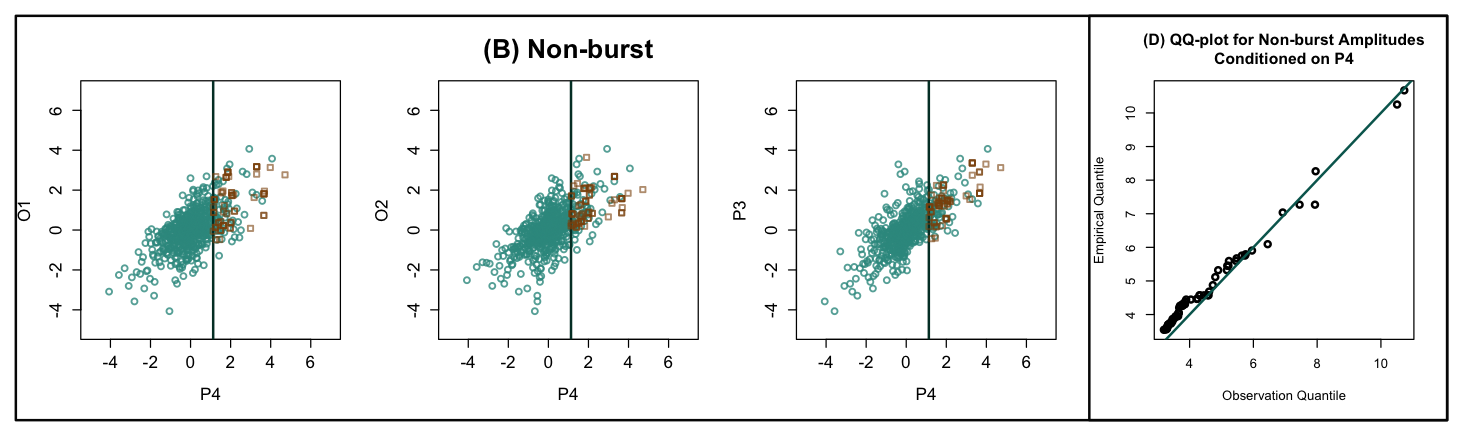}
    \caption{SpExCon estimates diagnostics for \textit{epileptic neonate 2}, conditioned on large values of channel P4.}
    \centering
    \includegraphics[width=1.01\textwidth]{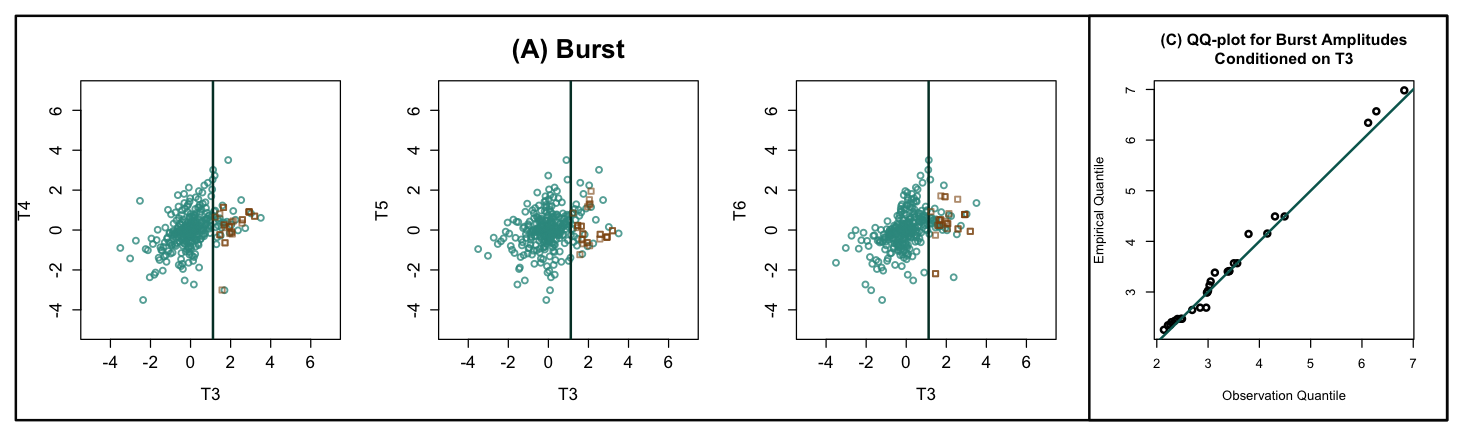}
    \includegraphics[width=1.01\textwidth]{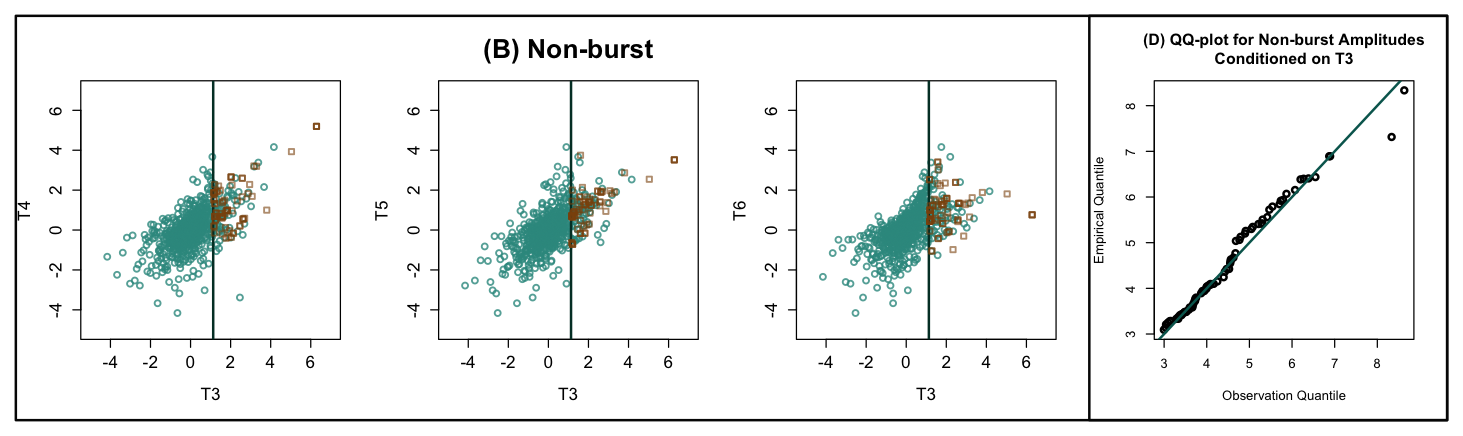}
    \caption{SpExCon estimates diagnostics for \textit{non-epileptic neonate 1}, conditioned on large values of channel T3.}
\end{figure}
    
\begin{figure}
    \centering
    \includegraphics[width=1.01\textwidth]{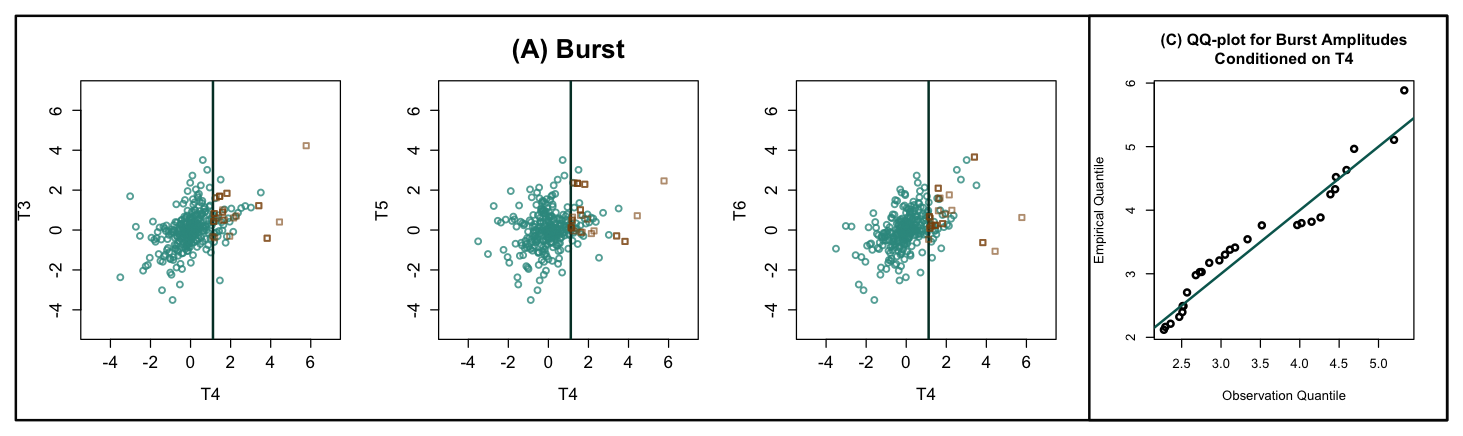}
    \includegraphics[width=1.01\textwidth]{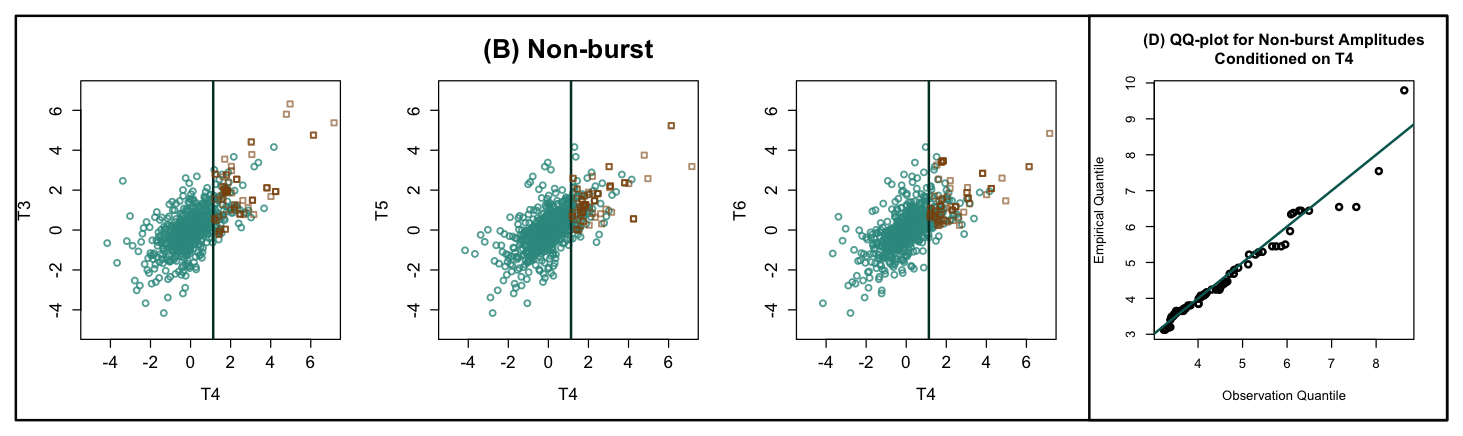}
    \caption{SpExCon estimates diagnostics for \textit{non-epileptic neonate 1}, conditioned on large values of channel T4.}
    \centering
    \includegraphics[width=1.01\textwidth]{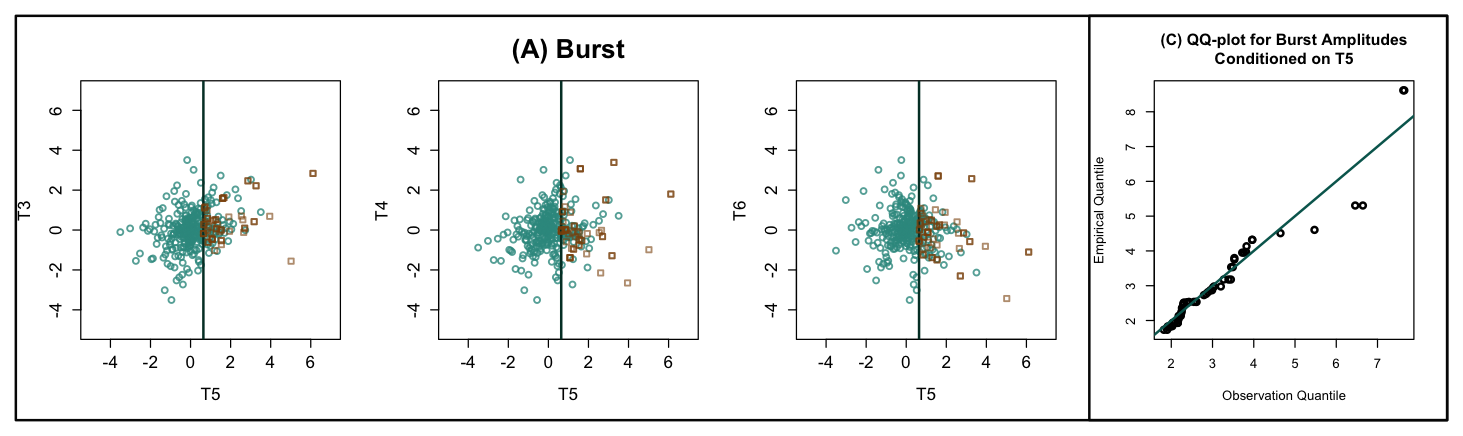}
    \includegraphics[width=1.01\textwidth]{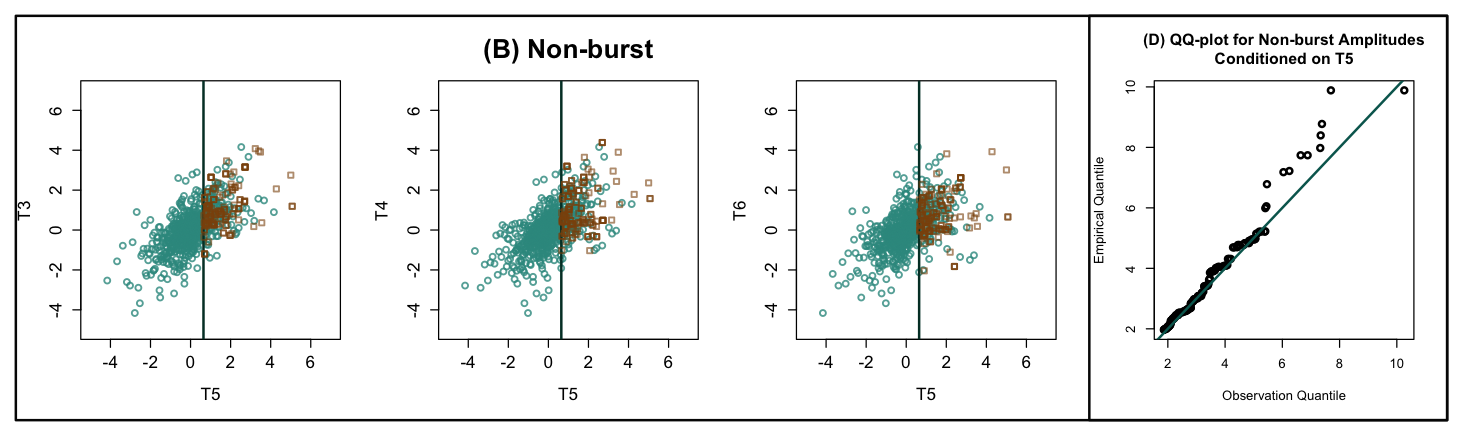}
    \caption{SpExCon estimates diagnostics for \textit{non-epileptic neonate 1}, conditioned on large values of channel T5.}
\end{figure}

\begin{figure}
    \centering
    \includegraphics[width=1.01\textwidth]{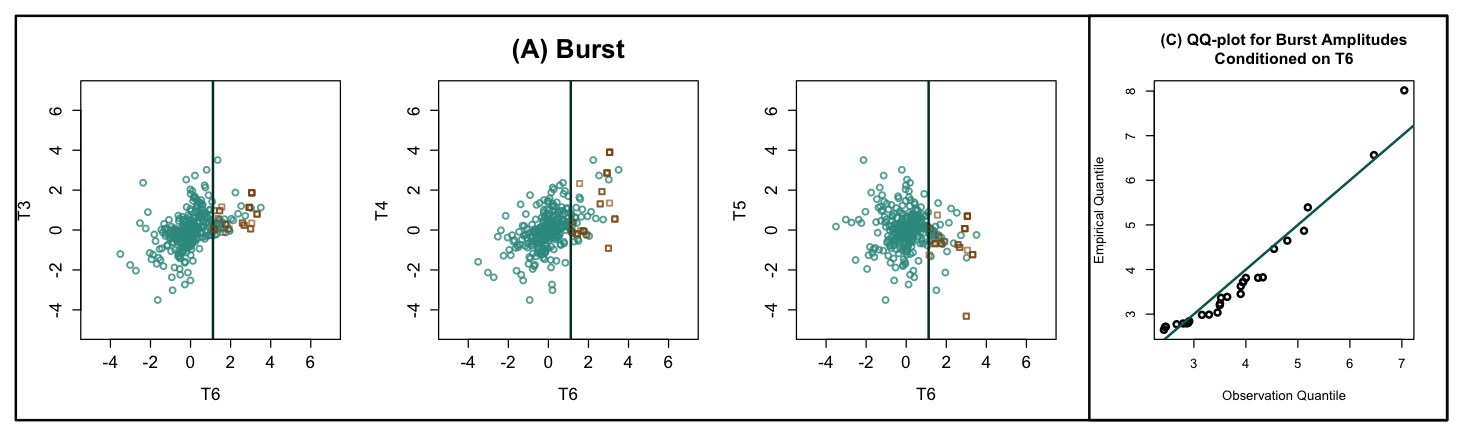}
    \includegraphics[width=1.01\textwidth]{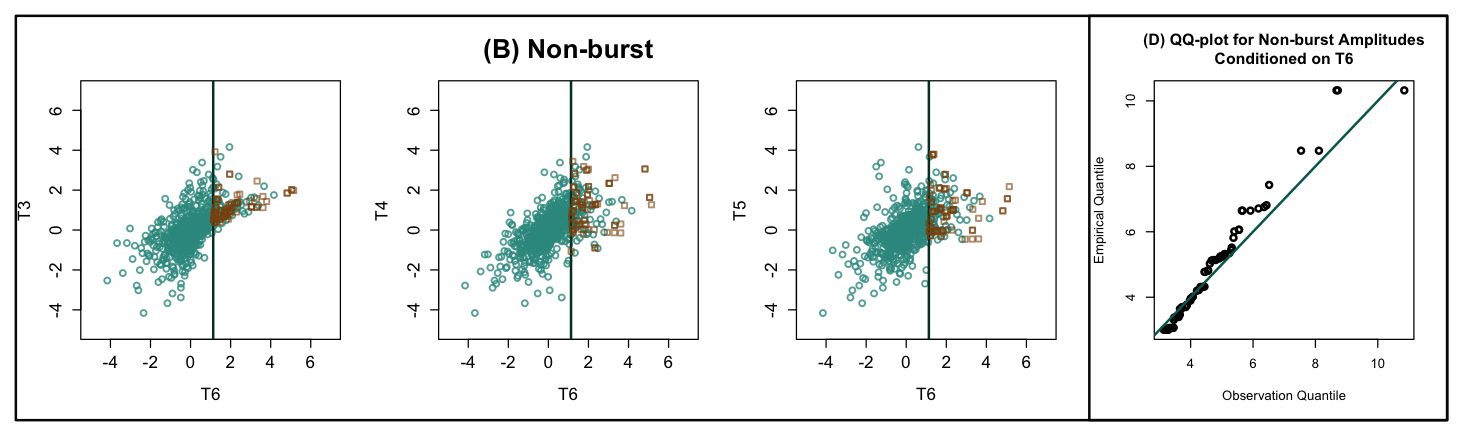}
    \caption{SpExCon estimates diagnostics for \textit{non-epileptic neonate 1}, conditioned on large values of channel T6.}
    \centering
    \includegraphics[width=1.01\textwidth]{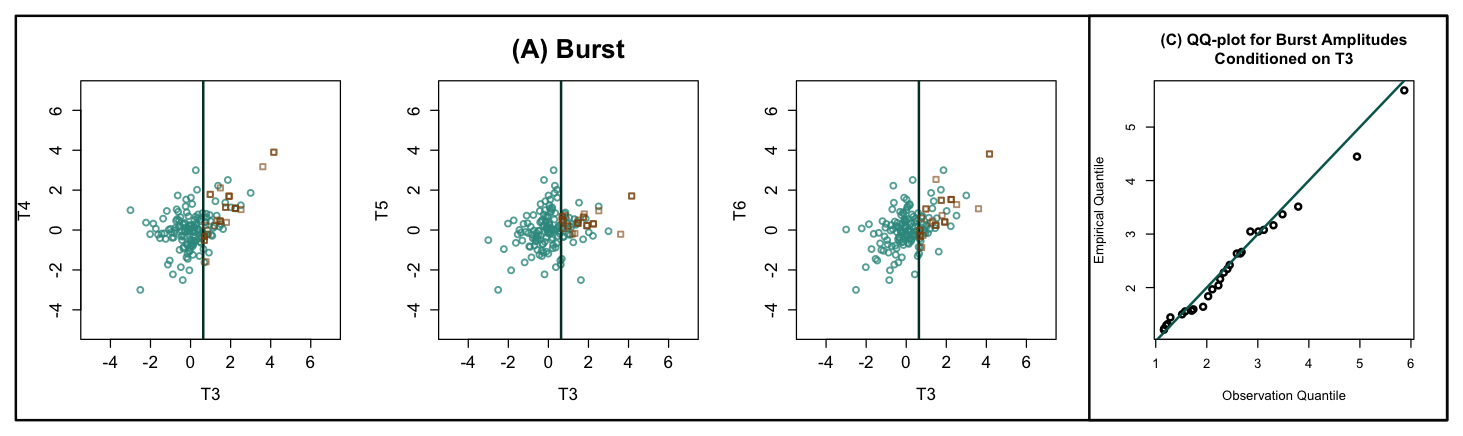}
    \includegraphics[width=1.01\textwidth]{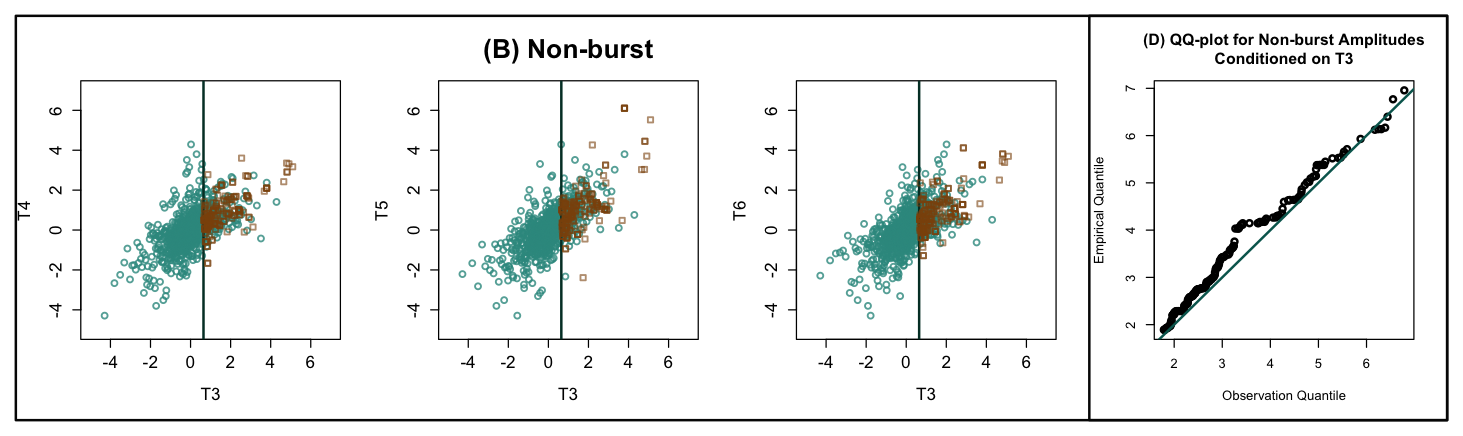}
    \caption{SpExCon estimates diagnostics for \textit{non-epileptic neonate 2}, conditioned on large values of channel T3.}
\end{figure}

\begin{figure}
    \centering
    \includegraphics[width=1.01\textwidth]{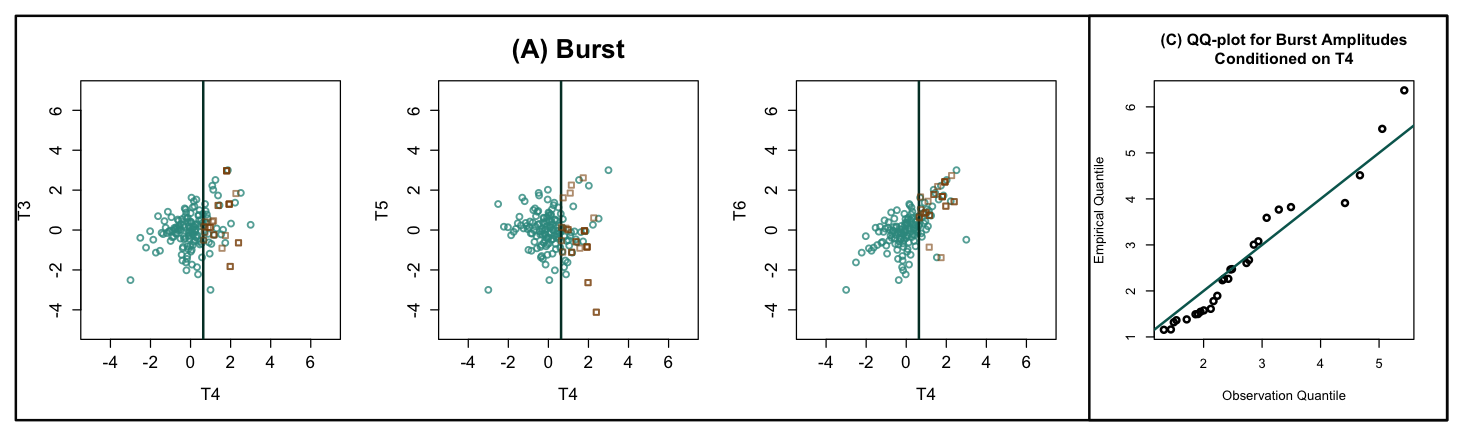}
    \includegraphics[width=1.01\textwidth]{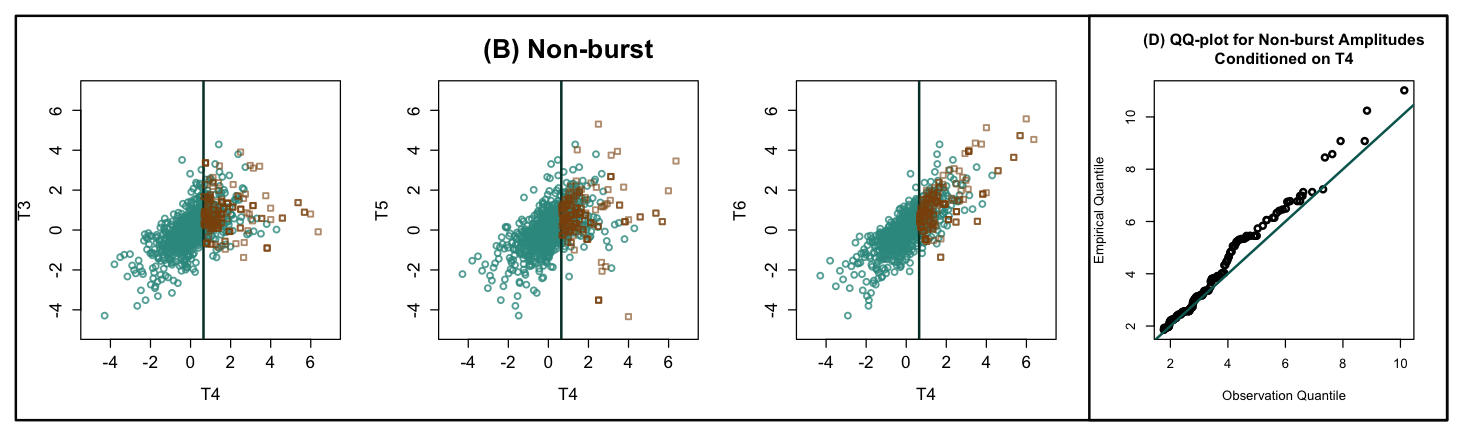}
    \caption{SpExCon estimates diagnostics for \textit{non-epileptic neonate 2}, conditioned on large values of channel T4.}
    \centering
    \includegraphics[width=1.01\textwidth]{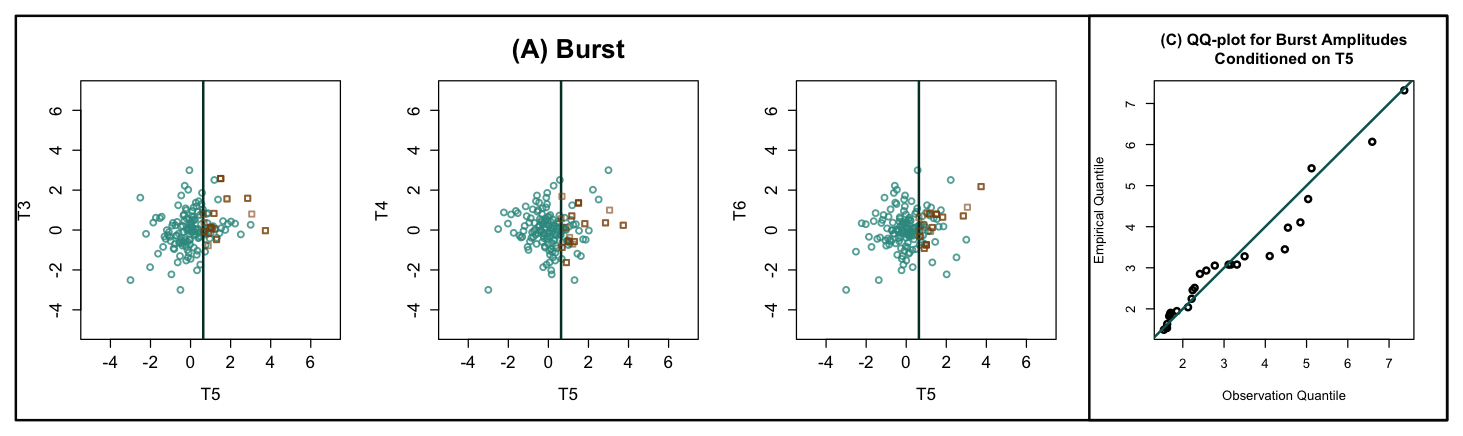}
    \includegraphics[width=1.01\textwidth]{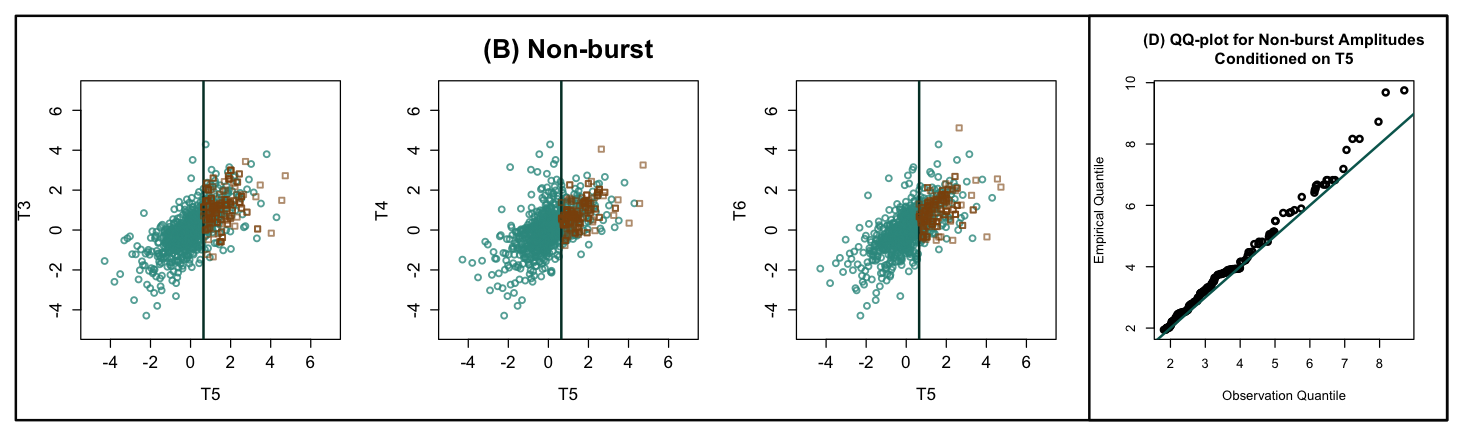}
    \caption{SpExCon estimates diagnostics for \textit{non-epileptic neonate 2}, conditioned on large values of channel T5.}
\end{figure}

\begin{figure}
    \centering
    \includegraphics[width=1.01\textwidth]{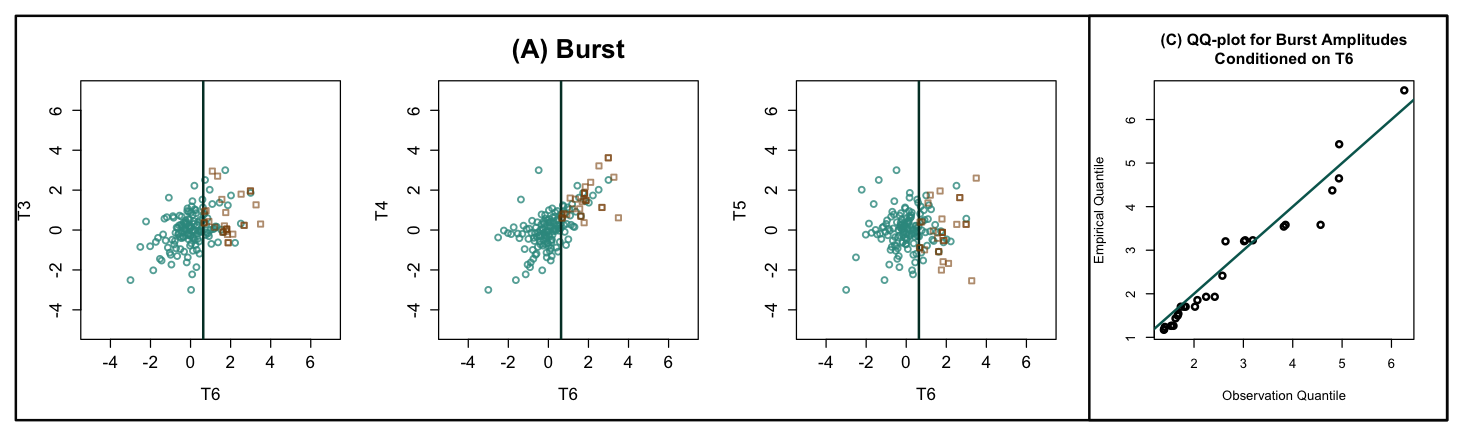}
    \includegraphics[width=1.01\textwidth]{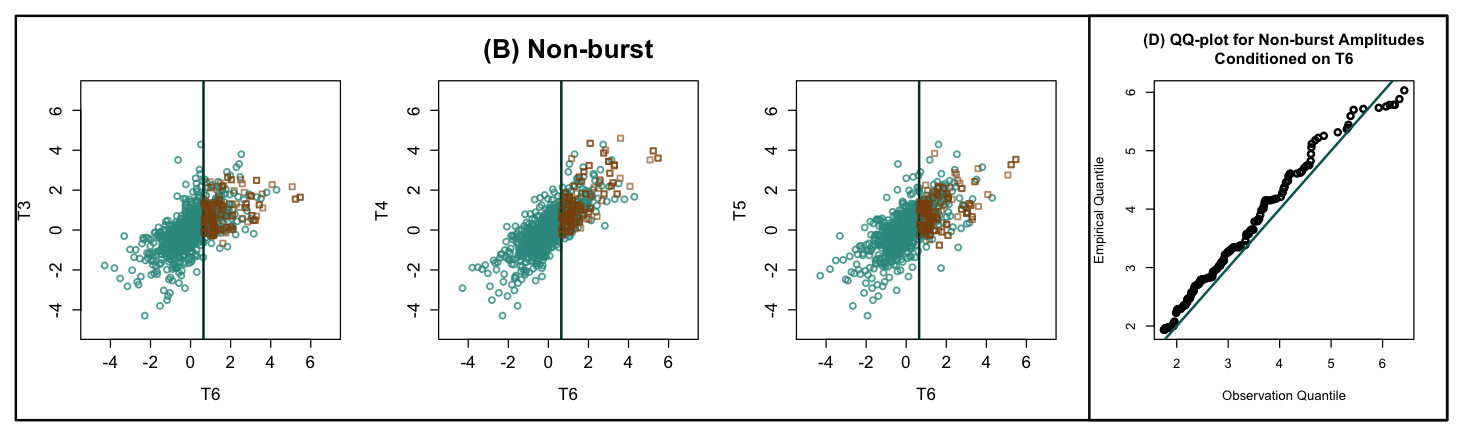}
    \caption{SpExCon estimates diagnostics for {\textit{non-epileptic neonate 2}}, conditioned on large values of channel T6.}
    \centering
    \includegraphics[width=1.01\textwidth]{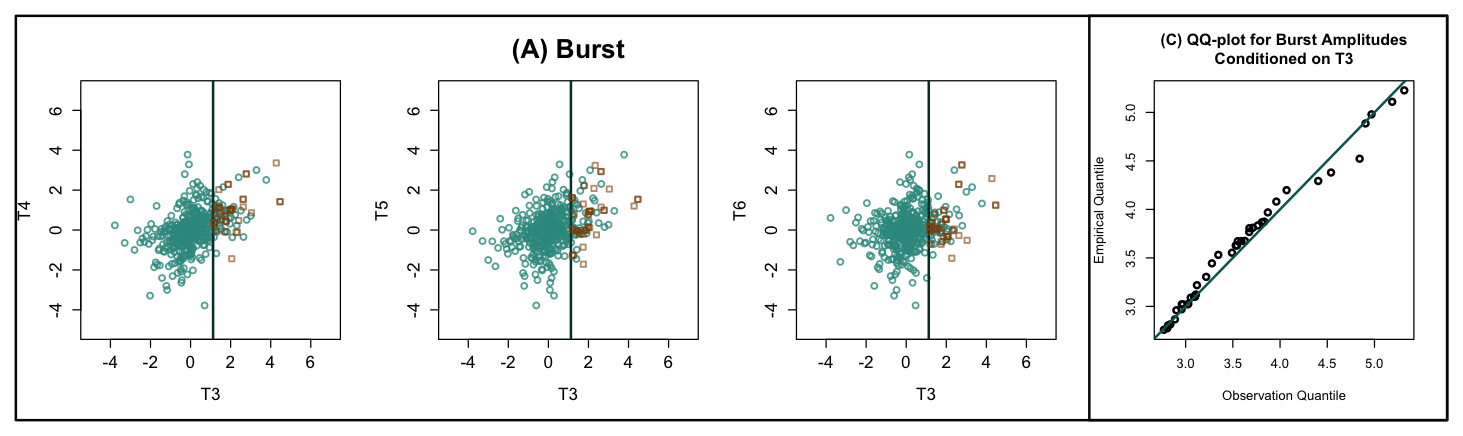}
    \includegraphics[width=1.01\textwidth]{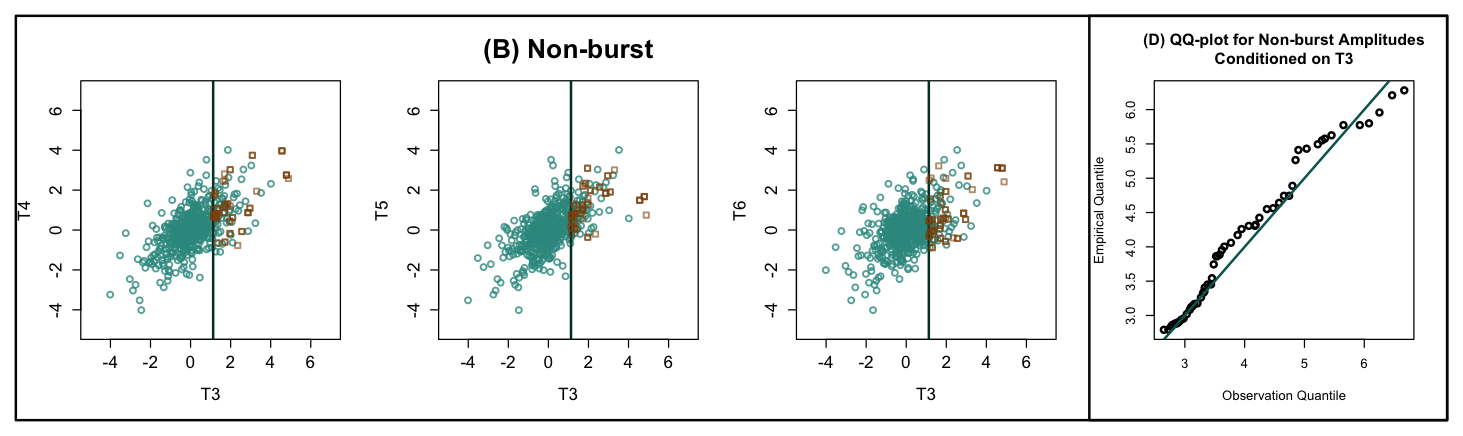}
    \caption{SpExCon estimates diagnostics for \textit{non-epileptic neonate 3}, conditioned on large values of channel T3.}
\end{figure}

\begin{figure}
    \centering
    \includegraphics[width=1.01\textwidth]{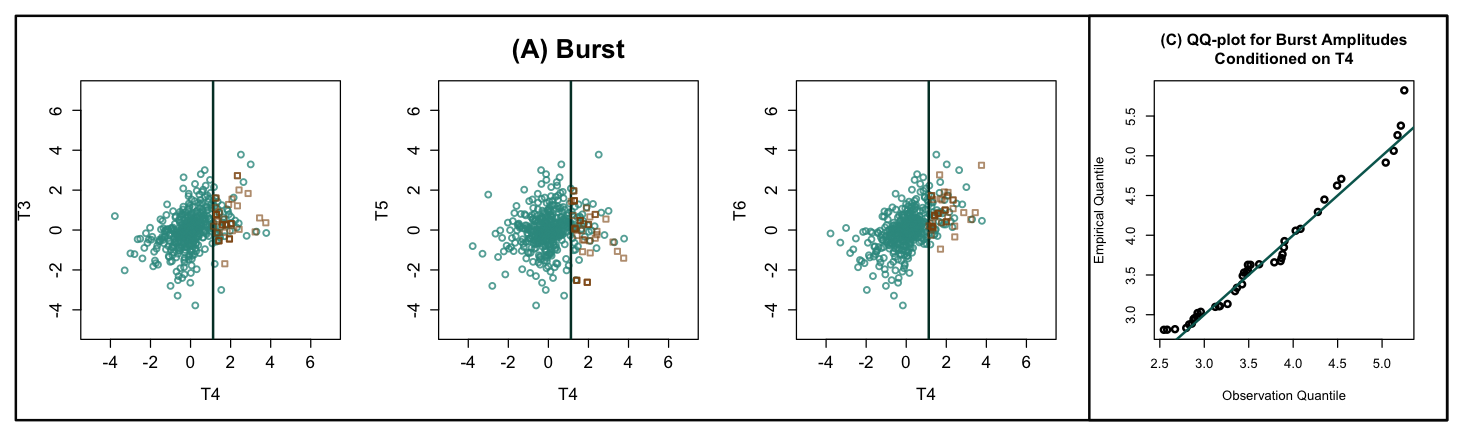}
    \includegraphics[width=1.01\textwidth]{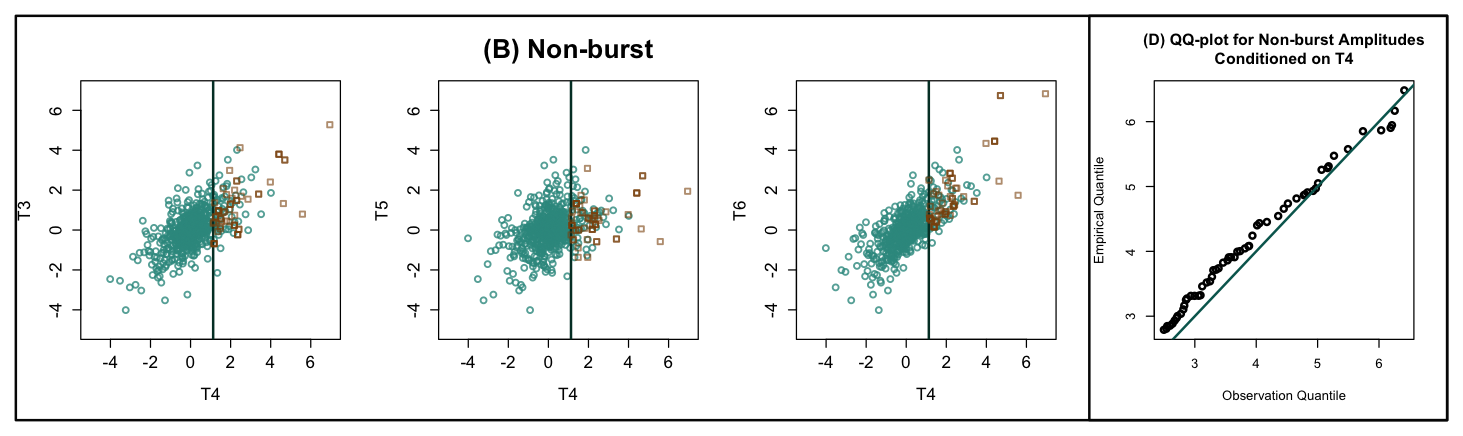}
    \caption{SpExCon estimates diagnostics for \textit{non-epileptic neonate 3}, conditioned on large values of channel T4.}
    \centering
    \includegraphics[width=1.01\textwidth]{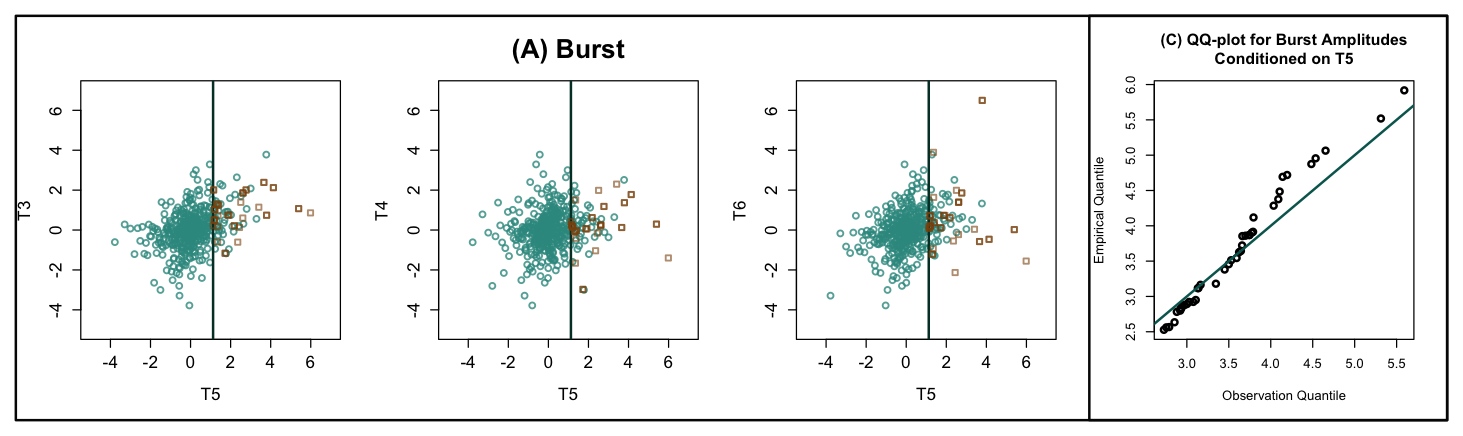}
    \includegraphics[width=1.01\textwidth]{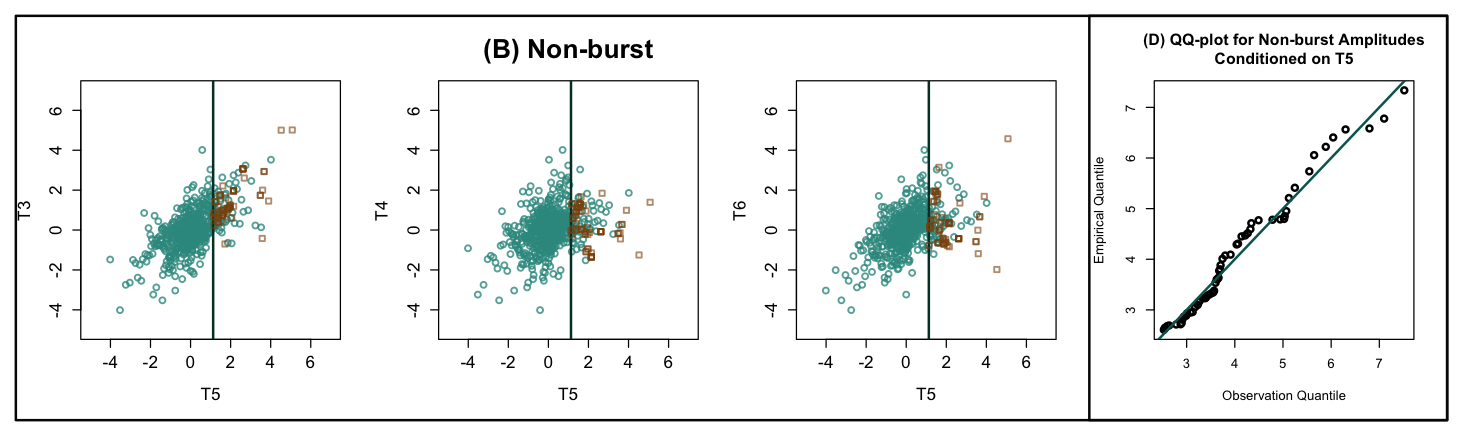}
    \caption{SpExCon estimates diagnostics for \textit{non-epileptic neonate 3}, conditioned on large values of channel T5.}
\end{figure}

\begin{figure}
    \centering
    \includegraphics[width=1.01\textwidth]{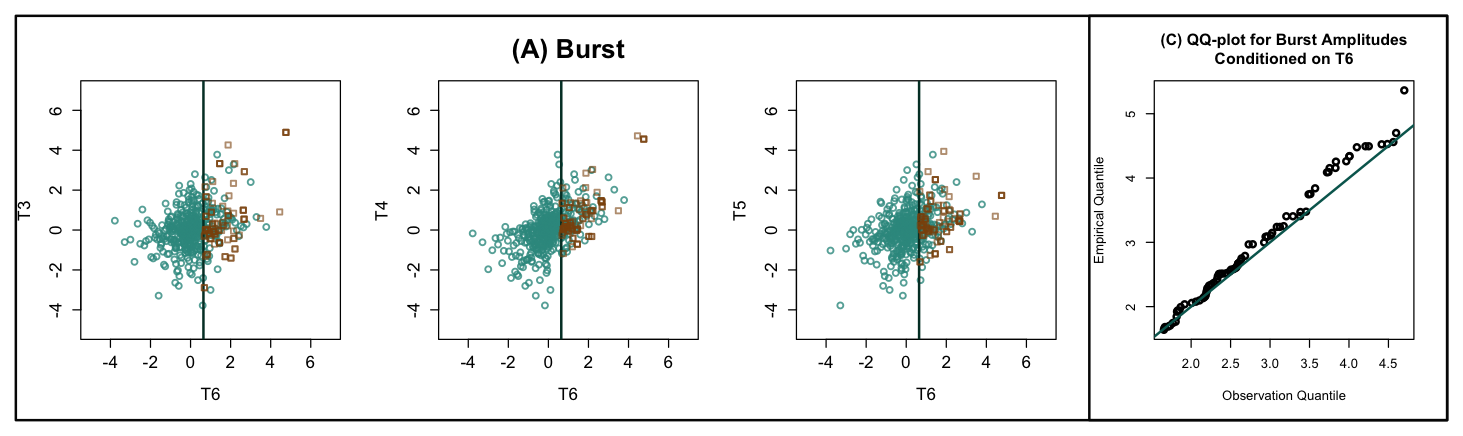}
    \includegraphics[width=1.01\textwidth]{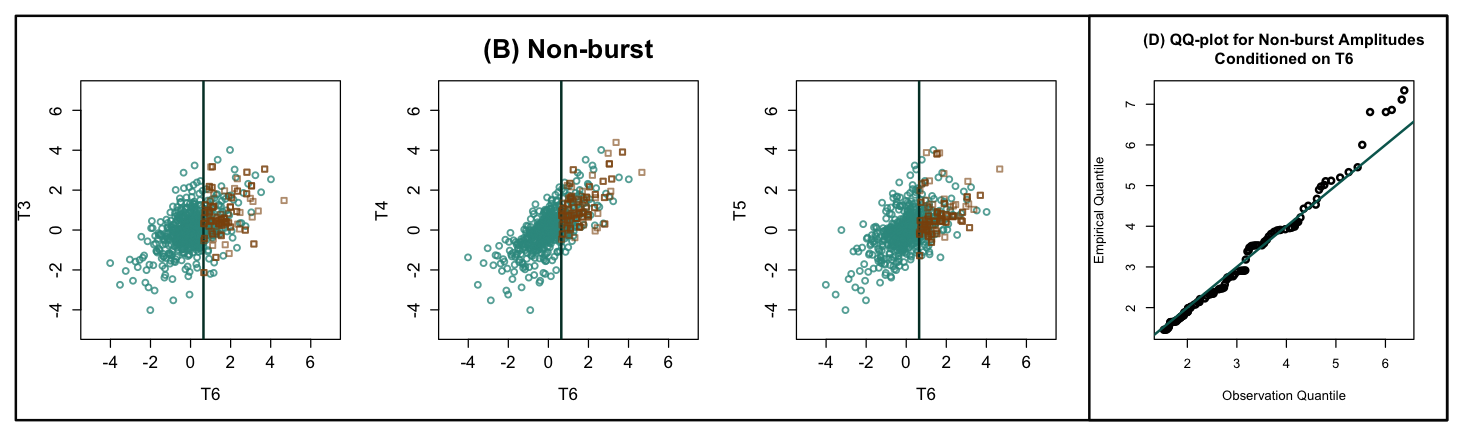}
    \caption{SpExCon estimates diagnostics for \textit{non-epileptic neonate 3}, conditioned on large values of channel T6.}
    
\end{figure}
       
\end{document}